\documentclass[sumlimits,intlimits,namelimits%,draft
]{amsart}

\usepackage{bm}

\usepackage{amssymb,latexsym}
\usepackage[numeric,initials,bibtex-style,nobysame]{amsrefs}
\usepackage{amsthm}

\usepackage{a4wide}
\usepackage[american]{babel}

\usepackage{eucal}
\usepackage{mathrsfs}

\usepackage[dvipsnames]{xcolor}

\def\pg{\mathhexbox278}

\newcommand{\Cinf}{\ensuremath{\mathcal{C}^\infty}}

\newcommand{\D}{\ensuremath{\mathcal{D}}}
\newcommand{\G}{\ensuremath{\mathcal{G}}}
\renewcommand{\S}{\mathscr{S}}

\newcommand{\mb}[1]{\ensuremath{\mathbb{#1}}}
\newcommand{\N}{\mb{N}}

\newcommand{\R}{\mb{R}}
\newcommand{\C}{\mb{C}}

\newcommand{\vp}{\mathrm{vp}}
\newcommand{\sgn}{\mathop{\mathrm{sgn}}}

\renewcommand{\d}{\ensuremath{\partial}}
\newcommand{\diff}[1]{\frac{d}{d#1}}

\renewcommand{\div}{\mathop{\mathrm{div}}}

\newfont{\bl}{msbm10 scaled \magstep2}

\newtheorem{theorem}{Theorem}[section]
\newtheorem{lemma}[theorem]{Lemma}
\newtheorem{proposition}[theorem]{Proposition}

\newtheorem{corollary}[theorem]{Corollary}
\newtheorem{assumption}[theorem]{Assumption}
\theoremstyle{definition}
\newtheorem{remark}[theorem]{Remark}
\newtheorem{example}[theorem]{Example}

\newcommand{\beq}{\begin{equation}}
\newcommand{\eeq}{\end{equation}}
\newcommand{\col}{\colon}
\newcommand{\dis}[2]{\langle #1 , #2 \rangle}
\newcommand{\inp}[2]{\langle #1 | #2 \rangle}  %math mode
\newcommand{\notmid}{\mid\kern-0.5em\not\kern0.5em}

\newcommand{\norm}[2]{{\left\| #1 \right\|}_{#2}}

\newcommand{\al}{\alpha}
\newcommand{\be}{\beta}

\newcommand{\de}{\delta}
\newcommand{\eps}{\varepsilon}

\newcommand{\vphi}{\varphi}

\newcommand{\la}{\lambda}

\newcommand{\om}{\omega}

\newcommand{\supp}{\mathop{\mathrm{supp}}}

\newcommand{\ovl}[1]{\overline{#1}}

\newcommand{\Ra}{R_\alpha}
\newcommand{\Za}{\dot{Z}_\alpha}
\newcommand{\e}{\textup{e}}
\newcommand{\ud}{\underline{\d}}

\usepackage{tikz-cd}
\usepackage{tikz}

\usepackage{stackengine,scalerel}
\newcommand\dAlauxx{%
  \Shortstack{\rule{11pt}{.7pt}\\
    \rule{.7pt}{9pt}\kern9pt\rule{1.3pt}{9pt}\\
    \rule{11pt}{1.35pt}}%
}
\newcommand\dAl{%
  \setstackgap{S}{0pt}%
  \setstackEOL{\\}%
  \scalerel*{\kern1pt\dAlauxx\kern1pt}{\Delta}%
}
\newcommand{\dal}{\mathop{\dAl}}

\newcommand{\xip}{\widetilde{\xi}^{\,\,\raisebox{0.3ex}{*}}\!}

\newcommand{\Uel}{U_{\text{ele}}}
\newcommand{\Uma}{U_{\text{mag}}}

\begin{document}

\pagestyle{plain}

\title{Modeling of point charges with distributions, regularizations, and generalized functions}

\author{G\"unther H\"ormann}

\author{Nathalie Tassotti}

\address{Fakult\"at f\"ur Mathematik\\
Universit\"at Wien, Austria}

\email{guenther.hoermann@univie.ac.at}

\subjclass[2020]{46F30, 78A35} 

\keywords{Regularization methods, generalized functions, Li\'{e}nard-Wiechert potential, electron self-energy.}

\date{\today}

\begin{abstract} We apply regularization and generalized function techniques to the classical electromagnetic field of a point-charge, following a strategy by the physicist A.\ Gsponer (\cite{GEJPCorr}-\cite{GsponerCTED}), thereby extending it and adding rigor. We show how the Li\'{e}nard-Wiechert potential emerges essentially from the basic geometry of Minkowski space, namely by action of the d'Alembertian on a generating vector field, which is defined via a spacetime interval between an observer and the point-charge at retarded proper time.  
Furthermore, for a charged particle in its rest frame, we discuss generalized functions aspects of the electric monopole, magnetic dipole, electron singularity, and self-energy, where infinitely large generalized numbers occur whose concrete representations can be applied in field renormalization. 

\end{abstract}

\maketitle

%\maketitle
%
%\begin{abstract} 
%We discuss spectral properties of a regularization approach to a Schr\"odinger equation set-up for the diffraction of a quantum particle at almost planar patterns. Physically meaningful initial values and potentials are modeled in terms of regularizing families and the solutions can be interpreted as generalized functions. We establish spectral and scattering theoretical properties of the regularizing solution families and provide some comparison with the more direct approximations and simplifications used in physics.
%\end{abstract}
%
%
%\noindent
%{\bf Keywords:} Schr\"odinger equation, regularization methods, generalized functions.

%
%\subjclass[2020]{81Q05, 35P25, 46F30} 
%
%\keywords{Schr\"odinger equation, regularization methods, generalized functions.}
%
%\date{\today}
%
%

%%%%%%%%%%%%%%%%%%%%%%%%%%%%%%%%%%
%%%%%%%%%%%%%%%%%%%%%%%%%%%%%%%%%%
%%%%%%%%%%%%%%%%%%%%%%%%%%%%%%%%%%

%%%%%%%%%%%%%% internal Section 0 %%%%%%%%%
%\setcounter{section}{-1}

\section{Introduction}

The analytic investigation of electric point-charges in interaction with their own field is notorious for its difficulties with singularities and lead to artefacts like preacceleration and runaway solutions (cf.\ \cite[Sections 4.3 and 5.5]{Parrott} or \cite[Section 8.4]{Thirring:01}). One approach is to regularize the physical quantities while still keeping the geometric nature of point particles intact.  The field dynamics has been studied by regularizations in the context of Colombeau generalized functions in \cite{HK:98} and \cite{HK:00}. The basic set-up in the context of electrodynamics is also described in \cite{GsponerCTED}. Although some of the inconsistencies could be avoided without regularizations by employing specific initial or boundary conditions or axiomatically require a modified action principle (see \cite[Chapter 6]{Rohrlich}), such attempts cannot always be considered satisfactory (cf.\ \cite[Remark (8.4.21)]{Thirring:01})

The Colombeau-based regularization approach for the electromagnetic field of a point-charge has been substantially extended in \cite{GEW}, starting from pure Minkowski space notions of spatial distance and retarded proper time and constructing a generalized vector field $\Phi$, which we describe in Section \ref{SecMink}. 

The action of the d'Alembertian on $\Phi$ produces the Li\'{e}nard-Wiechert potential and an additional potential with relation to weak interaction. We give a refined and more detailed analysis of the core of this process in Section \ref{SecLWP}. In other words, $\Phi$ satisfies a vector wave equation with the Li\'{e}nard-Wiechert potential and a certain weak interaction term as a right-hand side.

In Section \ref{SecSelf}, we focus on a charged particle in its rest frame, where the Li\'{e}nard-Wiechert potential reproduces the Coulomb potential, along the lines of and extending \cite{GsponerJMP}, discussing the electric monopole, magnetic dipole, electron singularity, and self-energy. We show that for a certain class of regularizations both the electric as well as the magnetic self-energy are infinitely large generalized numbers whose representations are very concrete in terms of the underlying regularization. The latter may give some support in renormalization procedures.

The modeling of both the Li\'{e}nard-Wiechert potential and the Coulomb potential includes an essential factor in the form of a generalized function $\Upsilon$ as introduced and specified in \cites{GEW,GEJP,GPOT}. In Section \ref{HUps} we show in various approaches, by distribution theory, distributional products, or by nonlinear theories of generalized functions, that $\Upsilon$ should be considered equal or associated to the Heaviside function.

\subsection*{Generalized functions and coherence properties}

For the convenience of the reader, we review the very basics from the theory of Colombeau generalized functions. It can be considered an extension of the so-called \emph{sequential approach to distributions}, where each distribution is represented by approximating regularizing weakly convergent sequences modulo null sequences.

Colombeau regularization methods model nonsmooth objects by approximating nets of smooth functions, regardless of convergence, but with conditions of \emph{moderate} asymptotics, and identify regularizing nets whose differences compared to the moderateness scale are \emph{negligible}. A comprehensive modern introduction to Colombeau theory is \cite{GKOS:01}, although we will also make use of constructions and notations from \cite{Garetto:05b}, where  generalized functions based on a locally convex topological vector space $E$ are defined. Let the topology on $E$ be given by the family of seminorms $\{p_j\}_{j\in J}$, then the elements of  
$$
 \mathcal{M}_E := \{(u_\eps) \in E^{(0,1]}:\, \forall j\in J\,\, \exists N\in\N\quad p_j(u_\eps)=O(\eps^{-N})\, \text{as}\, \eps\to 0\}
$$
and
$$
 \mathcal{N}_E := \{(u_\eps) \in E^{(0,1]}:\, \forall j\in J\,\, \forall q\in\N\quad p_j(u_\eps)=O(\eps^{q})\, \text{as}\, \eps\to 0\},
$$ 
are called \emph{$E$-moderate} and \emph{$E$-negligible}, respectively. With respect to the componentwise operations, e.g., $(u_\eps) + (v_\eps) := (u_\eps + v_\eps)$ etc., $\mathcal{N}_E$ becomes a vector subspace of $\mathcal{M}_E$.  The \emph{Generalized functions based on $E$} are defined as the factor space $\G_E := \mathcal{M}_E / \mathcal{N}_E$. If $E$ is a differential algebra, i.e., an associative commutative algebra possessing commuting linear maps $\d_j \col E \to E$ ($j=1,\ldots,n$) that satisfy the Leibniz rule $\d_j( f \cdot g) = (\d_j f) \cdot g + f \cdot \d_j g$,  then $\mathcal{N}_E$ is an ideal in $\mathcal{M}_E$ and $\G_E$ is  a differential algebra as well.

Specific choices of the underlying space $E$ yield the standard Colombeau algebras of generalized functions. For example, equipping $E=\C$ with the absolute value gives the generalized complex numbers $\G_{\C} = \widetilde{\C}$.  If $\Omega \subseteq \R^d$ is open, then $E=\Cinf(\Omega)$ with the topology of compact uniform convergence of all derivatives provides the so-called special Colombeau algebra $\G_{\Cinf(\Omega)}=\G(\Omega)$. Recall that $\Omega \mapsto \G(\Omega)$ is a fine sheaf, thus, in particular, the restriction $u|_B$ of $u\in\G(\Omega)$ to an arbitrary open subset $B \subseteq \Omega$ is well-defined and yields $u|_B \in \G(B)$. Moreover, we may embed $\D'(\Omega)$ into $\G(\Omega)$ by appropriate localization and convolution regularization. 

If $E \subseteq \D'(\Omega)$, then certain generalized functions can be projected into the space of distributions by taking  weak limits: We say that $u \in \G_E$ is \emph{associated} with $w \in \D'(\Omega)$, if $u_\eps \to w$ in $\D'(\Omega)$ as $\eps \to 0$ holds for any (hence every) representative $(u_\eps)$ of $u$. This fact is also denoted by $u \approx w$. 

On an open strip of the form $\Omega_T = \R^n \times\, ]0,T[ \,\subseteq \R^{n+1}$ (with $T > 0$ arbitrary) we have the spaces $E = H^\infty({\Omega_T}) = \{ h \in \Cinf(\Omega_T) : \d^\al h \in L^2(\Omega_T) \; \forall \al\in \N^{n+1}\}$ with the family of (semi-)norms 
$$
  \norm{h}{H^k} = \Big( \sum_{|\al| \leq k} 
    \norm{\d^\al h}{L^2}^2\Big)^{1/2}
   \quad (k\in \N), 
$$
or also   
$E = W^{\infty,\infty}({\Omega_T}) = \{ h \in \Cinf(\Omega_T) : \d^\al h \in L^\infty(\Omega_T) \; \forall \al\in \N^{n+1}\}$  with the family of (semi-)norms 
$$
  \norm{h}{W^{k,\infty}} = \max_{|\al| \leq k} \norm{\d^\al h}{L^\infty} \quad (k\in \N). 
$$
Clearly, $\Omega_T$  satisfies the strong local Lipschitz property \cite[Chapter IV, 4.6, p.\ 66]{Adams:75}, hence every element of $H^\infty(\Omega_T)$ and $W^{\infty,\infty}(\Omega_T)$ belongs to $\Cinf(\ovl{\Omega_T})$ by the Sobolev embedding theorem \cite[Chapter V, Theorem 5.4, Part II, p.\ 98]{Adams:75}.

We will occasionally use the notation  $\G_\infty (\R^n \times [0,T]) := \G_{W^{\infty,\infty}({\Omega_T})}$.
A generalized function $u \in \G_{\infty}(\R^n \times [0,T])$ is thus represented by a net $(u_\eps)$ with $u_\eps \in W^{\infty,\infty}(\R^n \times\, ]0,T[)$ and  the moderateness property
$$
    \forall k \, \exists m: \quad 
    \norm{u_{\eps}}{W^{k,\infty}} = O(\eps^{-m}) \quad (\eps \to 0).
$$
If $(\widetilde{u_{\eps}})$ is another representative of $u$, then 
$$
    \forall k \, \forall p: \quad 
    \norm{u_{\eps} - \widetilde{u_{\eps}}}{W^{k,\infty}} = O(\eps^{p}) 
    \quad (\eps \to 0).
$$
Note that by Young's inequality (\cite[Proposition 8.9.(a)]{Folland:99}) any standard convolution regularization with a scaled mollifier of Schwartz class provides an embedding  $L^p \hookrightarrow \G_{\infty}$ ($1 \leq p \leq \infty$).  

\begin{remark}
Colombeau-generalized functions from $\G_{H^\infty} (\R^n \times [0,T])$ have been employed successfully in the context of quantum physics, namely as solutions to the Cauchy problem for the Schr\"odinger equation. On the one hand, results  on existence, uniqueness, and coherence---in the form of association with distributional solutions---have been established (cf.\ \cite{Hoermann:11}). On the other hand, also limiting behavior of particular solutions, asymptotics of scattering, and spectral properties have been studied in this context (cf.\  \cite{Hoermann:17, HOS:25}).
\end{remark}

%%%%%%%%%%%%%%%%%%%%%%%%%%%%%%%%%
\section{The basic quantities and regularizations on Minkowski space}\label{SecMink}

\subsection{Set-up and notation}

The underlying spacetime is Minkowski space $\R^4$ with the Lorentz metric $(g_{\al \be})_{0 \leq \al, \be \leq 3}$ of signature $(+,-,-,-)$. We will also employ abstract index notation and summation convention (cf.\ \cite[Chapter 2]{PR1}).

We consider the relativistic four-position $Z(\tau)$ of a moving charged point particle in spacetime,  parametrized by eigentime $\tau$. Thus $Z \col \R \to \R^4$, $\tau \mapsto Z(\tau)$, is a future-directed timelike smooth curve  with $\dot{Z_\mu}(\tau) \dot{Z^\mu}(\tau) = 1$ and $\dot{Z_0}(\tau) > 0$ for every $\tau \in \R$. 
Let $X_\mu$ be the arbitrary location of an observer in Minkowski space. 
We define the corresponding spacetime interval by $R_\mu := X_\mu-Z_\mu$, which gives a function 
$$
     (X,\tau) \mapsto X - Z(\tau) = R(X,\tau).
$$ 

For given $X$ and point particle $Z \col \R \to \R^4$, $\tau \mapsto Z(\tau)$ such that $\sqrt{\sum_{j=1}^3 \dot{Z}_j(\tau)^2} \leq c < 1$ for all $\tau \in \R$, we define the \emph{retarded proper time} as the unique value $\tau_r(X) \in \R$ such that $Z(\tau_r(X))$ intersects the backward lightcone emanating from $X$ (cf.\ \cite[Section 4.2 and Exercise 4.1]{Parrott}). In particular, we always have $\tau_r(Z(\tau)) = \tau$.  We  obtain the function 
$$
  X \mapsto \widetilde{R}(X) := R(X,\tau_r(X))
$$ 
on Minkowski space, which satisfies
\[
\widetilde{R}_\mu \widetilde{R}^\mu=0,
\]
i.e., $\widetilde{R}(X)$ is a future-directed (nonzero) null four-vector for every $X \in \R^4 \setminus Z(\R)$, i.e., off the world line of the particle, while $\widetilde{R}(Z(\tau)) = R(Z(\tau),\tau) = 0$.

\emph{Notation:}  For a quantity $E$ depending on $(X,\tau)$, as with $R$ above, we will often consider the composition with $X \mapsto (X, \tau_r(X))$ giving a function $\widetilde{E}$ on Minkowski space. In particular, we will introduce functions $\xi$, $K_\mu$, and $\kappa$ with corresponding compositions $\widetilde{\xi}$, $\widetilde{K}_\mu$, and $\widetilde{\kappa}$. 

We define the retarded distance as 
\beq \label{XiZR}
\xi:=\dot{Z}_\mu R^\mu = \dot{Z}^\mu R_\mu.  
\eeq
and  observe that 
\beq\label{xitipo}
   \widetilde{\xi}(X) = \xi(X,\tau_r(X)) = \dot{Z}_\mu(\tau_r(X)) \, R^\mu(X,\tau_r(X)) > 0 \quad \forall X \in \R^4 \setminus Z(\R),
\eeq
since $\dot{Z}$ is timelike and $\tilde{R}$ is lightlike, both future-directed. Note that smoothness of $\widetilde{\xi}$ on $\R^4 \setminus Z(\R)$ follows from that of $X \mapsto \tau_r(X)$, which in turn follows from the implicit function theorem applied to the equation $R_\mu(X,\tau) R^\mu(X,\tau) = 0$ defining $\tau = \tau_r(X)$ under the condition $0 \neq \d_\tau \big( R_\mu(X,\tau) R^\mu(X,\tau) \big) = 2 R_\mu(X,\tau) \dot{Z}^\mu(\tau) = 2 \xi(X,\tau)$ and \eqref{xitipo}. For points $X$ on the world line $Z(\R)$ we have $X = Z(\tau)$ with some $\tau \in \R$ and $\widetilde{\xi}(Z(\tau)) = \xi(Z(\tau),\tau) = \dot{Z}_\mu(\tau) \, R^\mu(Z(\tau),\tau) = 0$. Since $R(X,\tau) \to 0 = R(Z(\tau),\tau)$ as $X \to Z(\tau)$ in $\R^4$ we have continuity of $\widetilde{\xi} \col \R^4 \to \R$.

\begin{remark}\label{XiTauRem}  Recall that 
\beq\label{RetRel}
   X_0 - Z_0(\tau_r(X)) = |\vec{X} - \vec{Z}(\tau_r(X))|,
\eeq
where $|.|$ denotes the euclidean norm in $\R^3$ and $\vec{Y}$ the spatial part of $Y$ in Minkowski space. 

(i) Dropping for the moment the argument $\tau_r(X)$ in $Z$ and $\dot{Z}$ for simplicity, assuming $\vec{X} \neq \vec{Z}$, and denoting by $\inp{.}{.}$ the euclidean inner product in $\R^3$, we obtain from \eqref{XiZR} and \eqref{RetRel} the formula
$$
      \widetilde{\xi}(X)  = |\vec{X} - \vec{Z}|  
      \underbrace{\Big( \dot{Z}_0 - \Big< \dot{\vec{Z}} \mid\frac{\vec{X} - \vec{Z}}{ |\vec{X} - \vec{Z}|} \Big> \Big)}_{\qquad =: \, d(X)},
$$
hence $ \widetilde{\xi}(X)$ is proportional to $|\vec{X} - \vec{Z}|$. 
Note that $d(X) > 0$, since the parametrization of the future-directed timelike curve $Z$ by eigentime means $\dot{Z}_0 > 0$ and $\dot{Z}_0^2 - \big|\dot{\vec{Z}}\big|^2 = 1$ and then the Cauchy-Schwarz inequality implies $d(X) \geq \dot{Z}_0 - |\dot{\vec{Z}}| \cdot 1 = \sqrt{1 +  |\dot{\vec{Z}}|^2} -  |\dot{\vec{Z}}| > 0$. 

(ii) We obtain Lipschitz continuity of $\tau_r \col \R^4 \to \R$ from the boundedness assumption $|\dot{\vec{Z}}| \leq c < 1$: Relation \eqref{RetRel} gives 
$Z_0(\tau_r(X)) = X_0 - |\vec{X} - \vec{Z}(\tau_r(X))|$ and recalling the basic inequality between proper time differences and coordinate time differences we may deduce, for any $X, Y \in \R^4$,
\begin{multline*}
   |\tau_r(Y) - \tau_r(X)| \leq |Z_0(\tau_r(Y)) - Z_0(\tau_r(X))| = \Big|Y_0 - |\vec{Y} - \vec{Z}(\tau_r(Y))|
        - X_0 + |\vec{X} - \vec{Z}(\tau_r(X))| \Big|\\
        \leq |Y_0 - X_0| + \Big|  |\vec{X} - \vec{Z}(\tau_r(X))| - |\vec{Y} - \vec{Z}(\tau_r(Y))|  \Big|\\ 
        \leq  |Y_0 - X_0| + |\vec{X} - \vec{Y}| + | \vec{Z}(\tau_r(X)) - \vec{Z}(\tau_r(Y))|
           \leq  |Y_0 - X_0| + |\vec{Y} - \vec{X}| + c  |\tau_r(Y) - \tau_r(X)|.
\end{multline*}
This implies $ |\tau_r(Y) - \tau_r(X)| \leq ( |Y_0 - X_0| + |\vec{Y} - \vec{X}|)/(1-c)$, which shows Lipschitz continuity. Note that therefore also $\widetilde{\xi}$ is Lipschitz continuous on $\R^4$.

\end{remark}

On the subset of all $(X,\tau)$ with $\xi(X,\tau) \neq 0$ we may consider the null four-vector $K_\mu$ 
 \[
K_\mu := \frac{R_\mu}{\xi},
\]
 and we put $K_\mu := 0$ in case $\xi = 0$, which corresponds to the case $R = 0$ thanks to \eqref{xitipo}. 
 
Finally, we introduce the so-called acceleration invariant
\beq\label{KZKappa} 
\kappa:=\ddot{Z}_\mu K^\mu.
\eeq

\subsection{The current of the moving point charge as a distribution on Minkowski space}\label{JDis}
Let the constant $\e$ represent the elementary electric charge, then the current $J = (J_0, J_1, J_2, J_3)$ is a four-vector of distributions $J_\al \in \D'(\R^4)$ (cf.\ \cite[Section 4.1]{Parrott} or \cite[Remark (7.3.25), 2]{Thirring:01}), given by their action on a test function $\phi \in \D(M)$ in the form
$$
    \dis{J_\al}{\phi} := \e \, g_{\al \al} \int_\R \dot{Z}_\al(\tau) \phi(Z(\tau)) \, d \tau.
$$ 
Note that thanks to $\dot{Z}_0(\tau) \geq 1$ we have $Z_0(\tau) \geq \tau$, hence the curve $\tau \mapsto Z(\tau)$ enters and leaves the compact support of $\phi$ in finite time. With $\de_y$ denoting the Dirac distribution at $y \in \R^4$ and the weakly continuous map $y \mapsto \de_y$ from $\R^4$ into  $\D'(\R^4)$ we may interpret $J_\al$ with as a weak distributional integral
$$
    J_\al = \e \, g_{\al \al} \int_\R \dot{Z}_\al(\tau) \de_{Z(\tau)} \, d\tau.
$$
The retarded electromagnetic field generated from this (distributional) current is the (exterior) derivative of the Li\'enard-Wiechert potential
 (cf.\ \cite[Sections 8.2 and 8.3]{Thirring:01} or \cite[Section 4.2]{Parrott}). 

\subsection{The generalized vector field $\bm{\Phi}$} 
The key object in \cite{GEW} is a vector field $\Phi \col \R^4 \to \R^4$ whose d'Alembertian produces terms related to the Li\'enard-Wiechert potential and weak interactions. Following \cite[Equation (4.1)]{GEW}, we introduce the four-vector function $\Phi \col \R^4 \to \R^4$ with components
\beq\label{Phi}
\Phi_\alpha=\frac{\e}{2}\cdot \widetilde{R}_\al \cdot (H \circ \widetilde{\xi}) \quad (\alpha = 0,1, 2, 3),
\eeq
where $H \circ \widetilde{\xi}$ denotes a Colombeau-type generalized function on $\R$, more precisely $H \circ \widetilde{\xi} \in  \G_\infty(\R^4)$, and is defined as follows: Let $H \in \G_\infty(\R)$ be associated to the Heaviside function $\theta \in L^\infty(\R)$, i.e., $\theta(s) = 0$ for $s < 0$ and $\theta(s) = 1$ for $s > 0$, thus $H$ is represented by a family of functions $H_\eps \in W^{\infty,\infty}(\R)$, $0 < \eps \leq 1$, with moderate $\eps$-asymptotics and $H \approx \theta$ in the sense that $H_\eps \to \theta$ as $\eps \to 0$ holds in $\D'(\R)$; we necessarily have 
\beq \label{HD}
H' \approx \de_0;
\eeq
in addition, we require $H_\eps(r) = 0$ for $r \leq \eps$, see also Assumption \ref{AssH}(ii), then the composition $H_\eps \circ \widetilde{\xi}$ is smooth on all of $\R^4$ and we let $H \circ \widetilde{\xi}$ denote the class in $\G_\infty(\R^4)$ that is represented by the family $(H_\eps \circ \widetilde{\xi})_{0 < \eps \leq 1}$.

As we will also note in Equation \eqref{Hxi1} below, the factor $H \circ \widetilde{\xi}$ in the definition of $\Phi$ is associated to the constant function $1$, but not equal to it as a generalized function. This distinction between the linear distributional and the nonlinear generalized function aspects was a crucial point in Gsponer's modeling of the physics in \cite{GEJP}-\cite{GEW}, where in place of $H$ he employed a generalized function $\Upsilon$. We will justify our choice of $H$ as a slightly generalized model for Gsponer's $\Upsilon$ in Section \ref{HUps}.

Let $\dal=\d^\mu\d_\mu$ be the d'Alembert operator. We will derive an explicit expression for $\dal\Phi_\alpha$ in Section \ref{SecLWP}. Recall from Equation \eqref{Phi} that we have in detail
\beq\tag{$\text{\ref{Phi}}'$} \label{Phis}
    \Phi_\al(X) = \frac{\e}{2} \cdot R_\al(X,\tau_r(X)) \cdot H(\xi(X,\tau_r(X))),
\eeq
hence $\dal \Phi_\al$ involves implicit differentiation with the retarded proper time $\tau_r$, which is smooth only on $\R^4 \setminus Z(\R)$ and Lipschitz continuous on $\R^4$. Thus, $\Phi_\al$ may be considered a Colombeau-type generalized function on $\R^4 \setminus Z(\R)$ or as a hybrid of a Lipschitz continuous factor times a generalized function from $\G_\infty(\R^4)$. In any case, the representing regularizing family
$$
    X \mapsto \Phi_{\al \eps}(X) := \frac{\e}{2} \cdot R_\al(X,\tau_r(X)) \cdot H_\eps(\xi(X,\tau_r(X)))
$$
consists of Lipschitz continuous functions on $\R^4$ that are smooth on $\R^4 \setminus Z(\R)$. 

\begin{remark} One could alternatively employ regularization via convolution with a family of mollifiers $\rho_\eps$ on $\R^4$ ($0 < \eps \leq 1$) in the form $\Phi_\al^G * \rho_\eps$, where $\Phi_\al^G$ denotes Gsponer's generating function from \cite[Equation (4.1)]{GEW}, and  then study $\dal(\Phi_\al^G * \rho_\eps)$. However, the above variant in \eqref{Phi} does keep the geometrically defined components $\widetilde{R}_\al$ of the spacial interval  fully intact at each value of the regularization parameter
and in the sense of distributional limits both approaches describe the same "generating function".
\end{remark}

\subsection{Implicit differentiation involving the retarded proper time}

Let $E \col \R^4 \times \R \to \R$ be continuously differentiable on $\R^4 \times \R$. As above we consider the related function $\widetilde{E}$ on Minkowski space, i.e., 
$$
  \widetilde{E}(X) = E(X, \tau_r(X)).
$$
Differentiation of $\widetilde{E}$ with respect to $X_\mu$ (or $X^\mu$) will be denoted as usual by $\d_\mu$ (or $\d^\mu$, respectively), while we will use the symbols $\ud_\mu := \d_{X_\mu}$ or $\ud^\mu := \d_{X^\mu}$ for the corresponding derivatives of $E$. 

Recall that $\tau_r$ is smooth on $\R^4 \setminus Z(\R)$ and Lipschitz continuous on $\R^4$ by Remark \ref{XiTauRem}.

\begin{lemma}
On $\R^4 \setminus Z(\R)$ it holds that
\beq\label{3.5}
\d_\mu \widetilde{E} = \widetilde{\ud_\mu E}  + \widetilde{K}^\mu \cdot \widetilde{\d_\tau E},
\eeq
which means in detail
\beq\label{3.5V}
   \d_\mu \Big( E(X,\tau_r(X)) \Big) = \ud_\mu E(X,\tau_r(X)) + K^\mu(X,\tau_r(X)) \cdot \d_\tau E(X,\tau_r(X)).
\eeq
On $\R^4$ we have the analogous formula with $\widetilde{K}^\mu$ replaced by the $L^\infty$ factor $\ud_\mu \tau_r$.
\end{lemma}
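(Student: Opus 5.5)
The plan is to combine the chain rule with implicit differentiation of the light-cone condition that defines $\tau_r$. On $\R^4 \setminus Z(\R)$ the function $\tau_r$ is smooth, as already noted via the implicit function theorem. The chain rule applied to $X \mapsto E(X,\tau_r(X))$ gives
\[
\d_\mu \widetilde{E}(X) = \ud_\mu E(X,\tau_r(X)) + \d_\tau E(X,\tau_r(X)) \cdot \d_\mu \tau_r(X).
\]
So everything reduces to showing $\d_\mu \tau_r = \widetilde{K}^\mu$ off the world line. (With the paper's convention, the lower derivative index pairs with the upper component; this is what \eqref{3.5} asserts.)

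To compute $\d_\mu \tau_r$, differentiate the identity $R_\nu(X,\tau_r(X))\,R^\nu(X,\tau_r(X)) = 0$ with respect to $X_\mu$. Since $R = X - Z(\tau)$, we have $\ud_\mu (R_\nu R^\nu) = 2R^\mu$ and $\d_\tau(R_\nu R^\nu) = -2\dot Z_\nu R^\nu = -2\xi$. Hence
\[
2\widetilde{R}^\mu - 2\widetilde{\xi}\,\d_\mu\tau_r = 0.
\]
By \eqref{xitipo}, $\widetilde{\xi} > 0$ off $Z(\R)$, so $\d_\mu \tau_r = \widetilde{R}^\mu/\widetilde{\xi} = \widetilde{K}^\mu$. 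Substituting this into the chain rule formula gives \eqref{3.5} and its detailed form \eqref{3.5V}.

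For the statement on all of $\R^4$, the only problem is that $\tau_r$ is merely Lipschitz there (Remark \ref{XiTauRem}(ii)). By Rademacher's theorem $\tau_r$ is differentiable almost everywhere, and its weak derivatives $\d_\mu \tau_r$ lie in $L^\infty$ with bound $1/(1-c)$. Since $E$ is $C^1$, the composition $\widetilde{E}$ is locally Lipschitz. The chain rule for a $C^1$ function composed with a Lipschitz map holds pointwise almost everywhere, and the a.e. derivative of a locally Lipschitz function coincides with its distributional derivative. This gives the formula on $\R^4$ with $\ud_\mu \tau_r \in L^\infty$ in place of $\widetilde{K}^\mu$, as an identity in $L^\infty_{\mathrm{loc}}$.

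This last step is the main subtlety. An alternative that avoids Rademacher is to observe that $Z(\R)$ is a smooth curve, hence a Lebesgue null set in $\R^4$. One then identifies the distributional derivative with the classical one off the curve, using the fact that the distributional derivative of a Lipschitz function is an $L^\infty$ function and so cannot charge a null set. Either way, the factor $\ud_\mu\tau_r$ agrees a.e. with $\widetilde{K}^\mu$, which is bounded by the Lipschitz estimate.
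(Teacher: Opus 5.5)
Your proof is correct and follows the paper's route: the chain rule, then implicit differentiation of the null condition to get $\d_\mu\tau_r=\widetilde{R}^\mu/\widetilde{\xi}=\widetilde{K}^\mu$. The paper differentiates $X_\nu-Z_\nu(\tau_r(X))=\widetilde{R}_\nu$ and contracts with $\widetilde{R}^\nu$, which is the same computation, and your Rademacher/null-set justification of the $\R^4$ version supplies detail the paper leaves implicit in its appeal to the chain rule.
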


\begin{proof} We have from the chain rule
$$
\d_\mu \Big( E(X,\tau_r(X)) \Big)   
  = \ud_\mu E(X,\tau_r(X)) +  \d_\tau E(X,\tau_r(X)) \cdot \ud_\mu \tau_r(X)
$$
and it remains to show that $\ud_\mu \tau_r = \widetilde{K}^\mu$ holds on $\R^4 \setminus Z(\R)$.

Recalling 
\[
X_\nu-Z_\nu(\tau_r(X)) = R_\nu(X,\tau_r(X))
\]
and applying $\ud_\mu$ leads to
\[
\de_{\mu \nu} -\dot{Z}_\nu (\tau_r(X)) \ud_\mu \tau_r (X) = \ud_\mu \Big( R_\nu(X,\tau_r(X)) \Big).
\]
Upon multiplication by $\widetilde{R}^\nu$ and using $\widetilde{R}_\nu \widetilde{R}^\nu=0$ (with summation over $\nu$) we obtain
\[
 \widetilde{R}^\mu - \underbrace{(\dot{Z}_\nu \circ \tau_r) \cdot  \widetilde{R}^\nu}_{\widetilde{\xi}} \cdot \ud_\mu \tau_r 
    = \frac{1}{2}\ud_\mu(\underbrace{\widetilde{R}_\nu \widetilde{R}^\nu}_{=0}),
\]
hence $\widetilde{R}^\mu - \widetilde{\xi} \cdot \ud_\mu \tau_r = 0$ and therefore
\[
 \ud_\mu \tau_r = \frac{\widetilde{R}^\mu}{\widetilde{\xi}} = \widetilde{K}^\mu,
\]
which implies \eqref{3.5}.
\end{proof}

By analogous reasoning we also get
\beq \tag{$\text{\ref{3.5}}'$} \label{3.5s}
\d^\mu \widetilde{E} = \widetilde{\ud^\mu E}  + \widetilde{K}_\mu \cdot \widetilde{\d_\tau E},
\eeq
which we will apply immediately in the proof of (ii) in  the following lemma. We will apply the above formulae in case $E = H_\eps$ for each $\eps$, but often drop the explicit reference to $\eps$ in the notation.

\begin{lemma}\label{XiLem} We have

(i) $\d_\tau\xi  = \xi\kappa-1$ on $\R^4 \setminus Z(\R)$, 

(ii) $\ud^{\mu}\xi=\dot{Z}_\mu$ and $\ud_\mu \xi=\dot{Z}^\mu$, 

(iii) $\nabla \widetilde{\xi}(X) \neq 0$ for all $X \in \R^4 \setminus Z(\R)$.

\end{lemma}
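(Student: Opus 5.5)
Parts (i) and (ii) are direct computations from the definitions $\xi = \dot{Z}_\mu R^\mu$, $R^\mu = X^\mu - Z^\mu(\tau)$, $K^\mu = R^\mu/\xi$ and $\kappa = \ddot{Z}_\mu K^\mu$, carried out for fixed $(X,\tau)$ before composing with $\tau_r$.

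For (i), I differentiate in $\tau$ with $X$ fixed:
$$
\d_\tau \xi = \ddot{Z}_\mu R^\mu + \dot{Z}_\mu \d_\tau R^\mu = \ddot{Z}_\mu R^\mu - \dot{Z}_\mu \dot{Z}^\mu .
$$
The second term equals $-1$ by the eigentime normalization $\dot{Z}_\mu\dot{Z}^\mu = 1$. Wherever $\xi \neq 0$, the first term is $\xi \, \ddot{Z}_\mu K^\mu = \xi\kappa$. Evaluated at $\tau = \tau_r(X)$, this covers all of $\R^4\setminus Z(\R)$ by \eqref{xitipo}. The identity in fact holds wherever $\xi\neq 0$, and at $\xi = 0$ the convention $K=0$ still gives $\d_\tau\xi = -1 = \xi\kappa - 1$.

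For (ii), I differentiate in $X$ with $\tau$ fixed. Since $\ud^\mu R^\nu = \de^{\mu\nu}$ and $\dot{Z}$ does not depend on $X$, we get $\ud^\mu \xi = \dot{Z}_\nu \de^{\mu\nu} = \dot{Z}_\mu$. The same argument with lowered indices gives $\ud_\mu\xi = \dot{Z}^\mu$; here I only need to keep the index bookkeeping consistent with the paper's convention, as in \eqref{3.5} and \eqref{3.5s}.

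For (iii), I apply the lemma above (formula \eqref{3.5}) with $E = \xi$, which is smooth on $\R^4\times\R$, and insert (i) and (ii). This gives
$$
\d_\mu \widetilde{\xi} = \dot{Z}^\mu + \widetilde{K}^\mu (\widetilde{\xi}\widetilde{\kappa} - 1),
$$
with $\dot{Z}$, $\ddot{Z}$ evaluated at $\tau_r(X)$. To show this vector is nonzero, I contract it with $\widetilde{R}_\mu$, using $\widetilde{R}_\mu\widetilde{K}^\mu = \widetilde{R}_\mu\widetilde{R}^\mu/\widetilde{\xi} = 0$. The contraction is $\dot{Z}^\mu \widetilde{R}_\mu = \widetilde{\xi}$, which is $> 0$ by \eqref{xitipo}. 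A vector with nonzero contraction against $\widetilde{R}$ cannot vanish, so $\nabla\widetilde{\xi}(X)\neq 0$. The one point needing care is that raising and lowering indices (or reading the contraction as a Minkowski pairing) does not affect whether the vector is zero.

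I expect (iii) to be the only step needing an idea, namely choosing to contract with the null vector $\widetilde{R}$ so that the $\widetilde{K}$-term drops out. Parts (i) and (ii) are routine.
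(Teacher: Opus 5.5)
Your proof is correct and follows the paper's. Parts (i) and (ii) are the same direct computations, and in (iii) you reach the same expression $\dot{Z}^\mu + \widetilde{K}^\mu(\widetilde{\xi}\widetilde{\kappa}-1)$ for the gradient. The only difference is the last step: you contract with the null vector $\widetilde{R}$ to get $\widetilde{\xi}>0$, whereas the paper observes that a vanishing gradient would make the timelike $\dot{Z}$ a multiple of the null $\widetilde{R}$.

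One caveat concerns your aside in (i) about $\xi=0$. It holds only on the retarded configuration, where $\xi=0$ forces $R=0$. For general $(X,\tau)$ one can have $\xi=0$ with $R\neq 0$ and $\ddot{Z}_\mu R^\mu\neq 0$, so the identity can fail there. The aside is not needed for the statement.
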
 
\proof (i): Recall  $\xi =\dot{Z}_\mu R^\mu =\dot{Z}_\mu (X^\mu-Z^\mu)$ to obtain
$$
  \d_\tau\xi =\ddot{Z}_\mu \underbrace{R^\mu}_{=\xi K^\mu}-\underbrace{\dot{Z}_\mu\dot{Z}^\mu}_{=1}=\xi\kappa-1.
$$

(ii): Follows from $\ud^\mu \xi = \d_{X^\mu}  (\dot{Z}_\al R^\al) = \dot{Z}_\al \, \d_{X^\mu}  (X^\al - Z^\al) = \dot{Z}_\mu$ etc.

(iii): Applying (i) in the second step and (ii) in the third step, we obtain
$$
  \d^\mu \widetilde{\xi} = \widetilde{\ud^\mu \xi} + \widetilde{K}_\mu  \widetilde{\d_\tau \xi} 
  = \widetilde{\ud^\mu \xi} + \widetilde{K}_\mu (\widetilde{\xi} \widetilde{\kappa} - 1)
  =  \widetilde{\dot{Z}^\mu} + \frac{\widetilde{R}_\mu}{\widetilde{\xi}} (\widetilde{\xi} \widetilde{\kappa} - 1)
  =  \widetilde{\dot{Z}^\mu} + \widetilde{R}_\mu \widetilde{\kappa} - \frac{\widetilde{R}_\mu}{\widetilde{\xi}}.
$$
Therefore, the equation $ \d^\mu \widetilde{\xi} =0 $ is equivalent to 
$$
     \widetilde{\dot{Z}^\mu} = \left( \frac{1}{\widetilde{\xi}} -  \widetilde{\kappa} \right) \widetilde{R}_\mu,
$$
which is impossible to obtain, because $\dot{Z}$ is timelike while $\widetilde{R}$ is null. 
\qed

\section{Emergence of the Li\'{e}nard-Wiechert potential from the d'Alembertian of $\Phi$}\label{SecLWP}

To reduce notational overload in the detailed calculations below we will from now on write $R$ in place of $\widetilde{R}$ and similarly for the other quantities. Moreover,  function arguments other than $\xi$ will often be omitted and differentiation will be understood to be carried out on $\R^4 \setminus Z(\R)$ whenever it involves $\xi$ directly.

\textbf{Some useful identities:} 
The following properties will be exploited in the calculations below:

\begin{enumerate}
	\item\label{RZ}  Properties of $R$: Recall that \label{RXiK} $\Ra=\xi K_\alpha$ by definition. For the first derivatives, we clearly  have $\d_\tau R_\alpha=-\dot{Z}_\alpha$; furthermore, $\ud_\mu \Ra = \delta_{\mu\alpha}$ and $\ud^\mu R_\mu = -2$, 
	since $\d_{X^\mu} (X_\mu - Z_\mu) = \d_{X^\mu} (g_\mu^\mu X^\mu - Z_\mu) = \sum_{\mu = 0}^3 g_\mu^\mu = -2$; 
	similarly, $\ud^\mu R^\mu = 4$, by $\d_{X^\mu} (X^\mu - Z^\mu) = \sum_{\mu = 0}^3 1 = 4$.  \label{RDelta} 
	For the second derivatives we obtain \label{BoxR} $\ud^\mu \ud_\mu \Ra := \underline{\dal} \Ra =0$.

	\item\label{RK} The relation $\ud^\mu\Ra K_\mu=K_\alpha$ follows from \ref{RDelta}).
		
	\item \label{ZZ} Parametrization of $Z$ by eigentime means $\dot{Z}_\mu\dot{Z}^\mu=1$ and 		
	\label{KZ} $K_\mu \dot{Z}^\mu=1$ follows by definition. Moreover, \label{KK} 
	from $R_\mu R^\mu = 0$ it also holds that $K_\mu K^\mu = 0$.
		
	\item\label{dZ} Since $\dot{Z}_\alpha$ does not depend on X, we obtain $\ud_\mu\dot{Z}_\alpha=0$.

	\item Derivatives of $\xi$: Direct calculation gives \label{RXiZ} $\ud_\mu \Ra \, \ud^\mu\xi = \dot{Z}_\alpha$, 	
	\label{ddXi} $\ud_\mu \d_\tau\xi = \d_\tau \ud_\mu \xi = \ddot{Z}^\mu$ follows directly from \eqref{XiZR}, and, obviously,
	 \label{BoxXi} $\underline{\dal} \xi=0$.
	
	\item\label{dmuKmu} We have $\d^\mu K^\mu = \frac{2}{\xi}$: This follows from \ref{RZ}), (i) and (ii) in 
	   Lemma \ref{XiLem},  Equation \eqref{3.5s},  and the definition of $K^\mu$; indeed
\begin{multline*}
  \d^\mu K^\mu = \d^\mu \left( \frac{R^\mu}{\xi} \right) = \frac{\d^\mu R^\mu \cdot \xi - R^\mu \cdot \d^\mu \xi}{\xi^2}
  = \frac{1}{\xi^2} \big( (\ud^\mu R^\mu + K_\mu \d_\tau R^\mu) \xi - R^\mu (\ud^\mu \xi + K_\mu \d_\tau \xi) \big) \\
  = \frac{1}{\xi^2} \big( (4 - K_\mu \dot{Z}^\mu) \xi - R^\mu (\dot{Z}_\mu + K_\mu (\xi \kappa - 1))  \big) 
  =  \frac{1}{\xi^2} \big( 4 \xi - \underbrace{K_\mu \dot{Z}^\mu}_{\frac{R_\mu \dot{Z}^\mu}{\xi} = 1} \xi - 
      \underbrace{R^\mu  \dot{Z}_\mu}_{\xi} + \underbrace{R^\mu K_\mu}_{0} (\xi \kappa - 1)  \big) \\
  =  \frac{1}{\xi^2} ( 4 \xi - \xi - \xi) = \frac{2}{\xi}.
\end{multline*}
	
\end{enumerate}

\textbf{First-order derivatives of $\bm{\Phi_\al}$:} Recall that the generalized function factor occurring in the definition of $\Phi_\al$ in Equations \eqref{Phi} or \eqref{Phis} is represented by the family of smooth functions $X \mapsto H_\eps(\xi(X,\tau_r(X)))$ with $0 < \eps \leq 1$. Using short-hand notations and writing here temporarily $H$ or $H(\xi)$ to mean $H_\eps \circ \widetilde{\xi}$, we have
\begin{align*}
\partial_\mu\Phi_{\alpha}
&=\frac{\e}{2}\partial_\mu R_\alpha H+\frac{\e}{2}R_\alpha\partial_\mu H\\
&=\frac{\e}{2}\left(\ud_\mu R_\alpha+K^\mu\partial_{\tau}R_\alpha \right) H
  +\frac{\e}{2}R_\alpha H'\cdot \partial_\mu\xi\\
&=\frac{\e}{2}\left(\ud_\mu R_\alpha+K^\mu\partial_{\tau}R_\alpha \right) H
    +\frac{\e}{2}R_\alpha H' \cdot \left(\ud_\mu \xi+K^\mu\partial_\tau\xi\right),
\end{align*}
by making use of Equation \eqref{3.5}.

\textbf{Second-order derivatives of $\bm{\Phi_\al}$:}
In a first step we calculate
\begin{align*}
\frac{2}{e}\dal\Phi_{\alpha}
& = \frac{2}{e}\d^\mu ( \d_\mu \Phi_{\alpha}) = 
   \d^\mu \Big(  \left(\ud_{\mu} R_\alpha+K^\mu\partial_{\tau}R_\alpha \right) H
     +R_\alpha H' \cdot \left(\ud_{\mu}\xi+K^\mu\partial_\tau\xi\right) \Big) \\
&=\underbrace{\d^\mu\left(\ud_{\mu}\Ra\cdot H\right)}_{A}+\underbrace{\d^\mu\left(K^\mu\d_\tau\Ra\cdot H\right)}_{B}
 +\underbrace{\d^\mu\left(\Ra H' \cdot \ud_{\mu}\xi\right)}_{C}+\underbrace{\d^\mu\left(\Ra H' K^\mu\d_\tau\xi\right)}_{D}.
\end{align*}

We investigate the four summands separately, keeping in mind that $H$ is short-hand for a generalized function on Minkowski space involving a family regularizations and the retarded proper time. We exercise some caution and provide all details in the otherwise elementary calculations to follow because it is a mix of smooth and generalized functions with direct or more implicit differentiations whenever the chain rule involves the retarded proper time:

The first term expands to 
\begin{align*}
A
&=\d^\mu(\ud_{\mu}\Ra\cdot H)=\d^\mu(\ud_{\mu}\Ra)\cdot H+\ud_{\mu}\Ra\d^\mu H\\
&=\ud^\mu \ud_{\mu}\Ra\cdot H+K_\mu\d_\tau\ud_{\mu}\Ra\cdot H+\ud_{\mu}\Ra H'\d^\mu \xi\\
&=\underline{\dal} \Ra\cdot H+K_\mu\ud_{\mu}\d_\tau\Ra\cdot H+\ud_{\mu}\Ra \ud^\mu \xi \cdot H' (\xi)+\ud_{\mu}\Ra K_\mu\d_\tau\xi \cdot H' (\xi),
\end{align*}
where we again used Equation \eqref{3.5}. Recall that $H'$ will represent certain pullbacks of Dirac delta $\de_0$ to Minkowski space in the distributional limit. The expression for term labeled $B$ in all its details is 
\begin{align*}
B
&=\d^\mu\left(K^\mu\d_\tau\Ra\cdot H\right)=\d^\mu K^\mu\cdot\d_\tau\Ra\cdot H+K^\mu\cdot\d^\mu (\d_\tau\Ra)\cdot H +K^\mu\cdot\d_\tau\Ra\cdot H'\d^\mu \xi\\
&= \frac{2}{\xi} \cdot\d_\tau\Ra\cdot H + 
     K^\mu \cdot (\ud^\mu \d_\tau\Ra+K_\mu\d_\tau^2 \Ra)\cdot H \\ 
    & \hphantom{=}  +  K^\mu\cdot\d_\tau\Ra\cdot H'(\xi)(\ud^\mu \xi+K_\mu\d_\tau\xi)\\
& = \Big( \frac{2}{\xi} \cdot\d_\tau\Ra+K^\mu\ud^\mu\d_\tau\Ra \Big) \cdot H+K^\mu\cdot\d_\tau\Ra\cdot\ud^\mu \xi \cdot H'(\xi),
\end{align*}
by using Equation \eqref{3.5} and identities from \ref{KK}) and \ref{dmuKmu}). The third term $C$ will be the first to involve $H''$, hence pullbacks of $\de_0'$ in the distributional limit, and  reads
\begin{align*}
C
&=\d^\mu\left(\Ra H'(\xi)\cdot\ud_{\mu}\xi\right)=(\d^\mu\Ra)\cdot H'(\xi)\cdot\ud_{\mu}\xi+\Ra\cdot\d^\mu H'(\xi)\cdot\ud_{\mu}\xi+\Ra\cdot H'(\xi)\cdot\d^\mu(\ud_{\mu}\xi)\\
&=(\ud^\mu\Ra+K_\mu\d_\tau\Ra)\cdot H'(\xi)\cdot\ud_{\mu}\xi +\Ra\cdot H''(\xi) \cdot \d^\mu(\xi) \cdot\ud_{\mu}\xi+\Ra\cdot H'(\xi)\cdot(\ud^\mu \ud_{\mu}\xi+K_\mu\d_\tau\ud_{\mu}\xi)\\
&=(\ud^\mu \Ra\ud_{\mu}\xi+K_\mu\d_\tau\Ra\ud_{\mu}\xi+\Ra(\underline{\dal} \xi+K_\mu\ud_{\mu}\d_\tau\xi))H'(\xi)
 + \Ra(\ud^{\mu} \xi \ud_{\mu} \xi + K_\mu\ud_{\mu}\xi\d_\tau\xi)H''(\xi) .
\end{align*}
Finally, the last summand defining $D$ gives, by calling on \ref{ZZ}),
\begin{align*}
D
&=\d^\mu\left(\Ra H'(\xi)K^\mu\d_\tau\xi\right)= (\d^\mu\Ra)H'(\xi)K^\mu\d_\tau\xi+ \Ra(\d^\mu H'(\xi))K^\mu\d_\tau\xi\\
&+\Ra H'(\xi)(\d^\mu K^\mu)\d_\tau\xi+\Ra H'(\xi)K^\mu(\d^\mu\d_\tau\xi)\\
&=(\ud^\mu\Ra+K_\mu\d_\tau\Ra)H'(\xi)K^\mu\d_\tau\xi+\Ra H''(\xi)(\ud^\mu\xi+K_\mu\d_\tau\xi)K^\mu\d_\tau\xi\\
&+ \Ra H'(\xi)(\ud^\mu K^\mu+K_\mu\d_\tau K^\mu)\d_\tau\xi+\Ra H'(\xi)K^\mu(\ud^\mu \d_\tau\xi+K_\mu\d^2_\tau\xi)\\
&= ((\ud^\mu\Ra K^\mu+\ud^\mu K^\mu \Ra+K_\mu\d_\tau K^\mu\Ra)\d_\tau\xi +\Ra K^\mu\ud^\mu \d_\tau\xi)H'(\xi)\\
&+\Ra\ud^\mu \xi K^\mu\d_\tau\xi H''(\xi).
\end{align*} 

We collect everything and define the terms $E$, $F$, and $G$ upon sorting by derivatives of $H$, i.e.,
\[
A+B+C+D=E\cdot H(\xi)+F\cdot H'(\xi)+G\cdot H''(\xi).
\]

Using \ref{BoxR}) the term $E$  simplifies considerably, since
$$
E =\underbrace{\underline{\dal} \Ra}_{0} +
   K_\mu \cdot \underbrace{\ud_{\mu}\d_\tau\Ra}_{0} +
     \frac{2}{\xi} \cdot \d_\tau\Ra+
     K^\mu \cdot \underbrace{\ud^\mu\d_\tau\Ra}_{0}
  = - \frac{2}{\xi} \dot{Z}_\alpha
$$

The lengthy expression for $F$ can be simplified by Lemma \ref{XiLem}(ii), \ref{RK}), \ref{dZ}), and \ref{RXiZ}), and yields
\begin{align*}
F
&=\ud_{\mu}\Ra \ud^\mu\xi+\ud_{\mu}\Ra K_\mu\d_\tau\xi+K^\mu\ud^\mu \xi\d_\tau\Ra+\ud^\mu\Ra \ud_{\mu}\xi+K_\mu \ud_{\mu}\xi \d_\tau\Ra + \Ra \underline{\dal} \xi+\Ra K_\mu \ud_{\mu}\d_\tau\xi\\
&+\ud^\mu\Ra K^\mu\d_\tau\xi+\ud^\mu K^\mu\Ra\d_\tau\xi+K_\mu\d_\tau K^\mu\Ra\d_\tau\xi+\Ra K^\mu\ud^\mu\d_\tau\xi\\
&=\Za+K_\alpha\d_\tau\xi-\Za+\Za-\Za + \Ra \underline{\dal} \xi+\Ra K_\mu\ud_{\mu}\d_\tau\xi\\
&+K_\alpha\d_\tau\xi+\ud^\mu K^\mu\Ra\d_\tau\xi+K_\mu\d_\tau K^\mu\Ra\d_\tau\xi+\Ra K^\mu\ud^\mu\d_\tau\xi\\
&=2\cdot K_\alpha\d_\tau\xi+\Ra K_\mu\ud_{\mu}\d_\tau\xi+\ud^\mu K^\mu\Ra\d_\tau\xi
 +K_\mu\d_\tau K^\mu\Ra\d_\tau\xi + \Ra K^\mu\ud^\mu\d_\tau \xi.
\end{align*}
Upon further inspection of this last expression and calling on \ref{ddXi}), \ref{dmuKmu}), and Lemma \ref{XiLem}(i), we can obtain a much shorter description by calculating
\begin{align*}
F
&=
2K_\alpha\d_\tau\xi+\Ra K_\mu\ud_{\mu}\d_\tau\xi+\ud^\mu K^\mu\Ra\d_\tau\xi+K_\mu\d_\tau K^\mu\Ra\d_\tau\xi+\Ra K^\mu\ud^\mu\d_\tau\xi\\
&= \Big( 2K_\alpha+ \underbrace{(\ud^\mu K^\mu + K_\mu\d_\tau K^\mu )}_{2/\xi} \Ra \Big) \d_\tau\xi 
+ \Ra K_\mu\ud_{\mu}\d_\tau\xi + \Ra \underbrace{K^\mu\ud^\mu\d_\tau\xi}_{K_\mu\ud_{\mu}\d_\tau\xi}\\
&=2 K_\alpha\d_\tau\xi   + \frac{2 \Ra\d_\tau\xi}{\xi}  + 
  2 \Ra K_\mu \cdot  \underbrace{ \ud_{\mu} \d_\tau \xi}_{\ddot{Z}^\mu} \\
&=  \left(2 K_\alpha + \frac{2 \Ra}{\xi} \right) \d_\tau\xi  + 2 \Ra  \underbrace{K_\mu \ddot{Z}^\mu}_{\kappa} =
   \left(2 K_\al + 2 K_\al \right) (\xi \kappa - 1) + \, 2 \underbrace{\Ra}_{K_\al \xi} \kappa\\
   &= 6 K_\al \xi \kappa - 4 K_\al.
\end{align*}

Finally, it remains to find an improved representation of $G$. Using (i) and (ii) from Lemma \ref{XiLem}, we get
\begin{align*}
G
&=\Ra \underbrace{\ud^{\mu} \xi \ud_{\mu} \xi}_{\dot{Z}_\mu \dot{Z}^\mu = 1} 
   + \Ra K_\mu\ud_{\mu}\xi\d_\tau\xi+\Ra\ud^\mu\xi K^\mu\d_\tau\xi
 = \Ra + \Ra \underbrace{K_\mu \dot{Z}^\mu}_{\xi/\xi = 1} \d_\tau\xi + \Ra \underbrace{\dot{Z}_\mu K^\mu}_{\xi/\xi = 1} \d_\tau\xi\\
&= \Ra + 2 \Ra \cdot \d_\tau\xi =  K_\alpha\xi (2 \xi\kappa-1).
\end{align*}

Summarizing, we arrive at the following result. 
\begin{theorem} The d'Alembertian of $\Phi$ is given on $\R^4 \setminus Z(\R)$ for each component $\al=0,1,2,3$ by
%\begin{multline} 
\beq \label{BoPhi}
   \dal\Phi_\alpha=\frac{e}{2}\Big( - \frac{2 \Za}{\xi} \cdot H(\xi) + 
   K_\alpha(6 \xi\kappa - 4) H'(\xi) +  \xi K_\alpha(2 \xi\kappa-1)H''(\xi)\Big) =  \Lambda_\al  H(\xi) +  \Psi_\al,
\eeq
where we have defined
$$
     \Lambda_\al := - e  \frac{\Za}{\xi} \quad\text{and}\quad  \Psi_\al := e K_\alpha 
        \left( (3 \xi\kappa - 2) H'(\xi) +  (\xi\kappa- \frac{1}{2}) \xi H''(\xi)\right).
$$
%\end{multline}
\end{theorem}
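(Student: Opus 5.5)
The theorem is essentially the summary of the computation carried out just before its statement. The proof plan is therefore to organize that computation cleanly and verify each reduction.

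Fix $\eps$ and work with the smooth representative $H_\eps\circ\widetilde{\xi}$ on $\R^4\setminus Z(\R)$. There $\tau_r$ is smooth, so all implicit differentiations are classical. Since $H_\eps$ is smooth, the resulting identity holds for every representative, and hence as an equation of generalized functions on $\R^4\setminus Z(\R)$.

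\textbf{First derivatives.} Write $\d_\mu\Phi_\al$ by the product rule and Equation \eqref{3.5}. The derivative falls either on $R_\al$, producing $\ud_\mu R_\al+K^\mu\d_\tau R_\al$, or on $H(\xi)$, producing $H'(\xi)(\ud_\mu\xi+K^\mu\d_\tau\xi)$.

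\textbf{Second derivatives.} Apply $\d^\mu$ to this expression. Each factor of the form $\widetilde{E}$ is again differentiated via \eqref{3.5s}, which splits $\frac{2}{\e}\dal\Phi_\al$ into the four terms $A,B,C,D$ above. Expanding them and sorting by $H,H',H''$ yields the coefficients $E,F,G$.

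\textbf{Reducing the coefficients.} Each coefficient is reduced using the list of identities. For $E$ we use $\underline{\dal}R_\al=0$, the fact that $\ud_\mu\d_\tau R_\al=-\ud_\mu\dot Z_\al=0$, $\d^\mu K^\mu=2/\xi$, and $\d_\tau R_\al=-\dot Z_\al$. This gives $E=-2\dot Z_\al/\xi$.

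For $G$, Lemma \ref{XiLem}(ii) gives $\ud^\mu\xi\,\ud_\mu\xi=\dot Z_\mu\dot Z^\mu=1$ and $K_\mu\dot Z^\mu=1$. Together with Lemma \ref{XiLem}(i) this yields $G=R_\al(1+2(\xi\kappa-1))=\xi K_\al(2\xi\kappa-1)$.

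$F$ is the main obstacle, since it has eleven terms. Four of them cancel pairwise as $\pm\dot Z_\al$, using $\ud_\mu R_\al\,\ud^\mu\xi=\dot Z_\al$ and $K^\mu\ud^\mu\xi=1$. Two of them give $K_\al\d_\tau\xi$ each, by identity \ref{RK}). The term $R_\al\underline{\dal}\xi$ vanishes. The remaining pieces are combined as follows:
\begin{itemize}
\item $\ud^\mu K^\mu+K_\mu\d_\tau K^\mu=2/\xi$, which is again identity \ref{dmuKmu});
\item $K_\mu\ud_\mu\d_\tau\xi=K_\mu\ddot Z^\mu=\kappa$, with the two such terms together giving $2R_\al\kappa$.
\end{itemize}
Substituting $\d_\tau\xi=\xi\kappa-1$ and $R_\al=\xi K_\al$ gives $F=4K_\al(\xi\kappa-1)+2\xi\kappa K_\al=6\xi\kappa K_\al-4K_\al$.

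I would double-check the index placement here, since the Lorentz metric is applied implicitly in expressions like $K^\mu\ud^\mu$. This is where a sign or factor error is most likely.

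\textbf{Conclusion.} Multiply by $\e/2$ to obtain \eqref{BoPhi}. Then read off $\Lambda_\al=-\e\dot Z_\al/\xi$ and, by factoring $\e K_\al$, $\Psi_\al=\e K_\al\big((3\xi\kappa-2)H'+(\xi\kappa-\tfrac12)\xi H''\big)$.
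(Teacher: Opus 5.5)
Your proposal is correct and is essentially the paper's own argument. It uses the same expansion via \eqref{3.5} and \eqref{3.5s} into the terms $A,B,C,D$, regroups them as $E\,H+F\,H'+G\,H''$, and reduces these with the same identities (notably $\d^\mu K^\mu=2/\xi$ and $\d_\tau\xi=\xi\kappa-1$) to $E=-2\dot Z_\al/\xi$, $F=6\xi\kappa K_\al-4K_\al$ and $G=\xi K_\al(2\xi\kappa-1)$; your bookkeeping of the eleven terms of $F$ matches the paper line by line, and the splitting into $\Lambda_\al H(\xi)+\Psi_\al$ is then immediate.
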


Recall that we temporarily simplified the notation, writing $\xi$ in place of $\widetilde{\xi}$ etc., which also applies to the above equation. We will now switch back to the original notation with $\widetilde{\;\;\;}$.

\begin{remark} (i) As was already noted in \cite[Section 4]{GEW}, the factor $\Lambda_\al$ in the first term on the right-hand side of Equation  \eqref{BoPhi} is the  Li\'enard-Wiechert potential, which is usually derived from the distributional current given in Subsection \ref{JDis} (cf.\ \cite[Sections 8.2 and 8.3]{Thirring:01} or \cite[Section 4.2]{Parrott}). 
(Beware of the fact that some sources use a convention with the reversed signature of the Minkowski metric.) 
As mentioned in \cite[paragraph after Equation (4.2)]{GEW}, a related classical result in the form $\dal R = 2/\xi$ has apparently been observed in the context of using retarded coordinates.

(ii) Recall that $H \approx \theta$, as mentioned below Equation \eqref{Phi}, i.e., the regularizations $H_\eps$ of $H$ converge to the Heaviside function $\theta$ in the sense of distributions. Note that Equation \eqref{BoPhi} involves composition of $H$, $H'$, and $H''$ with the function $\widetilde{\xi}\col \R^4 \to \R$ and that our assumption on the regularizations $H_\eps$ of $H$, namely $H_\eps(r) = 0$ for $r \leq \eps$, guarantees smoothness of the compositions $H_\eps \circ \widetilde{\xi}$, $H_\eps' \circ \widetilde{\xi}$, and $H_\eps'' \circ \widetilde{\xi}$. 
Since $\nabla \widetilde{\xi}$ vanishes nowhere on $\R^4 \setminus Z(\R)$ by Lemma \ref{XiLem}(iii), we may use the pullback of distributions on $\R$ by $\widetilde{\xi}$ to distributions on $\R^4 \setminus Z(\R)$ according to \cite[Theorem 6.1.2]{Hoermander:V1}. This gives a continuous map $\D'(\R) \to \D'(\R^4 \setminus Z(\R))$, $u \mapsto \xip u$, where in case of $u$ being continuous, $\xip u = u \circ \widetilde{\xi}$. 
In particular, $H_\eps \circ \widetilde{\xi} = \xip H_\eps \to \xip \theta$ as $\eps \to 0$. The detailed  representation of the pullback in \cite[Equation (6.1.1)]{Hoermander:V1} gives $\xip \theta = 1$, hence we conclude
\beq\label{Hxi1}
   H(\widetilde{\xi}) \approx 1 \quad\text{and}\quad \Lambda_\al  H(\widetilde{\xi}) \approx \Lambda_\al.
\eeq
\end{remark}

Gsponer in \cite[page 7]{GEW} mentions that in a pure distribution theoretic interpretation of \eqref{BoPhi} only the term with the Li\'enard-Wiechert potential is visible. We provide here an argument in support of this observation along the lines of part (ii) in the previous remark.

\begin{proposition} We have $\dal \Phi_\al \approx \Lambda_\al$ off the world line of the charged point particle.
\end{proposition}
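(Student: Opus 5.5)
We plan to work on the open set $\Omega := \R^4 \setminus Z(\R)$ and to take the distributional limit $\eps \to 0$ of each term in Equation \eqref{BoPhi}, evaluated on a representative. For fixed $\eps$ this reads
$$
\dal\Phi_{\al\eps} = \Lambda_\al\, (H_\eps\circ\widetilde{\xi}) + \e \widetilde{K}_\al (3\widetilde{\xi}\widetilde{\kappa}-2)(H_\eps'\circ\widetilde{\xi}) + \e \widetilde{K}_\al (\widetilde{\xi}\widetilde{\kappa}-\tfrac12)\,\widetilde{\xi}\,(H_\eps''\circ\widetilde{\xi}).
$$
The coefficients $\Lambda_\al$, $\widetilde{K}_\al(3\widetilde{\xi}\widetilde{\kappa}-2)$ and $\widetilde{K}_\al(\widetilde{\xi}\widetilde{\kappa}-\frac12)\widetilde{\xi}$ are all smooth on $\Omega$. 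This holds because $\widetilde{\xi}>0$ there by \eqref{xitipo}, and $\tau_r$ is smooth on $\Omega$.

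Multiplication by a fixed smooth function is continuous on $\D'(\Omega)$. It therefore suffices to determine the limits of $H_\eps\circ\widetilde{\xi}$, $H_\eps'\circ\widetilde{\xi}$ and $H_\eps''\circ\widetilde{\xi}$ in $\D'(\Omega)$. By Lemma \ref{XiLem}(iii), $\nabla\widetilde{\xi}$ vanishes nowhere on $\Omega$, so the pullback $\xip\col\D'(\R)\to\D'(\Omega)$ from \cite[Theorem 6.1.2]{Hoermander:V1} is available. It is continuous, it agrees with composition on continuous functions, and it commutes with differentiation via the chain rule. Since $H_\eps\to\theta$ in $\D'(\R)$, we also get $H_\eps'\to\de_0$ and $H_\eps''\to\de_0'$ in $\D'(\R)$. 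Continuity of $\xip$ then gives
$$
H_\eps\circ\widetilde{\xi}\to \xip\theta,\qquad H_\eps'\circ\widetilde{\xi}\to\xip\de_0,\qquad H_\eps''\circ\widetilde{\xi}\to \xip\de_0'.
$$

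The key point is that $\widetilde{\xi}$ takes only values in $]0,\infty[$ on $\Omega$. For a distribution $u$ with $\supp u \cap\, ]0,\infty[\, = \emptyset$, locality of the pullback (\cite[Equation (6.1.1)]{Hoermander:V1}) shows $\supp \xip u\subseteq \widetilde{\xi}^{-1}(\supp u)=\emptyset$. Hence $\xip\de_0=0$ and $\xip\de_0'=0$ on $\Omega$, while $\xip\theta=1$ as already noted in \eqref{Hxi1}.

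A more direct alternative avoids the pullback theory altogether. Take $\phi\in\D(\Omega)$. Its support $\supp\phi$ is compact in $\Omega$, and $\widetilde{\xi}$ is continuous and positive there, so $\widetilde{\xi}\geq c_\phi>0$ on $\supp\phi$. The inner integral becomes a one-dimensional pairing of $H_\eps''$ with a test function supported in $[c_\phi,\infty[$. Making this step rigorous is the main obstacle: distributional convergence alone does not force $H_\eps'$ and $H_\eps''$ to be small pointwise on $[c_\phi,\infty[$. So this route still has to pass through the coarea/pullback formula. The cleanest path is therefore to rely on the continuity and support properties of $\xip$, both of which Hörmander's theorem provides.

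Combining the three limits, $\dal\Phi_{\al\eps}\to\Lambda_\al\cdot 1+0+0=\Lambda_\al$ in $\D'(\Omega)$. This limit is independent of the chosen representative, since negligible differences tend to $0$. This proves $\dal\Phi_\al\approx\Lambda_\al$ off the world line.
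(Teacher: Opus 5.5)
Your argument is correct and is essentially the paper's proof. It uses smooth, $\eps$-independent coefficients on $\R^4\setminus Z(\R)$ and continuity of the pullback $\xip$ from H\"ormander's Theorem 6.1.2. It then uses $\xip\theta=1$, and $\xip\de_0=\xip\de_0'=0$ because $\widetilde{\xi}>0$ off the world line, so the would-be support $\{\widetilde{\xi}=0\}$ is empty there.

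Your aside on the direct route is fair under the Section 2 hypotheses on $H_\eps$. If Assumption \ref{AssH}(iii) is also imposed, that route becomes trivial, since $H_\eps\circ\widetilde{\xi}=1$ and $H_\eps'\circ\widetilde{\xi}=H_\eps''\circ\widetilde{\xi}=0$ on $\supp\phi$ as soon as $2\eps<c_\phi$.
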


\proof In view of \eqref{Hxi1} it suffices to show that $\Psi_\al \approx 0$. 

Since $H \approx \theta$ we have $H' \approx \de_0$ and $H'' \approx \de_0'$. The structure of the regularizations of $\Psi_\al$ is of the form 
$$
   \Psi_{\al \eps} = f_1 \cdot (H_\eps' \circ \widetilde{\xi}) + f_2 \cdot \widetilde{\xi} \cdot (H_\eps'' \circ \widetilde{\xi})
$$ 
with smooth factors $f_1$ and $f_2$, which are independent of the regularization parameter $\eps$. Therefore, we obtain (again employing continuity of the pullback)
$$
   \lim_{\eps \to 0} \Psi_{\al \eps} = f_1 \cdot (\xip \de_0) + f_2 \cdot \widetilde{\xi} \cdot (\xip \de_0').
$$
The support of both $\xip \de_0$ and $\xip\de_0'$ is contained in $S := \{ X \in \R^4\setminus Z(\R) \mid \widetilde{\xi}(X) = 0 \}$ (see \cite[Theorem 6.1.5 and the discussion of multiple layers]{Hoermander:V1}). However, $S = \emptyset$ by \eqref{xitipo}, therefore $\xip \de_0 = 0$ and $\xip \de_0' = 0$ in $\D'(\R^4 \setminus Z(\R))$.
\qed

\begin{remark} (i) A main argument in \cite[Sections 4 and 5]{GEW}  (where $\Psi_\al$ is denoted as $\Phi^K$) is making use of the fact that $\Psi_\al \neq 0$ as a generalized function, although $\Psi_\al \approx 0$ in the distributional sense. The emphasis is then on physical arguments relating a renormalized variant of $\Psi_\al$ to weak interaction in the simplified Fermi theory. The renormalization is attempted by multiplication with $\la^2/\xi^2$, where $\la$ has to be an infinite generalized constant (by smoothness of $\xi > 0$ off the world line of the charged particle, we still have $\Psi_\al/\xi^2 \approx 0$).

(ii) The reasoning for and conclusion of $\Psi_\al \approx 0$ on $\R^4 \setminus Z(\R)$ only is certainly not satisfying, because it stems essentially from avoiding the expected support $Z(\R)$ of a hypothetical distributional interpretation of $\xip \de_0$ and $\widetilde{\xi} \cdot (\xip\de_0')$ globally on $\R^4$. While the factors $K_\al$ and $\kappa$ in $\Psi_\al$ are obstacles to extending $\Psi_\al$ to all of $\R^4$ due to factors $1/\widetilde{\xi}$, the Lipschitz continuity of $\widetilde{\xi}$ (according to the observation in Remark \ref{XiTauRem}(ii)) might tempt one into relating $\xip \de_0$ to a density on the submanifold $Z(\R)$ (similarly as in \cite[Theorem 6.1.5]{Hoermander:V1}.). However, we will not pursue such attempt here, since it might be physically reasonable to have all potentials defined off the world line of the charged particle (see, e.g., \cite[(8.3.1)]{Thirring:01}).

\end{remark}

\section{Regularization of the electromagnetic self-interaction energy of a point particle at rest}\label{SecSelf}

We consider now the situation of a charged particle in its rest frame. The static case of the Li\'{e}nard-Wiechert potential yields the Coulomb potential (cf.\ \cite[(8.2.17) and (8.2.18)]{Thirring:01}). Note that, in principle, electrodynamics in the form of Maxwell's equations can be derived from electrostatics plus the requirement of Lorentz invariance (cf.\ \cite{Scharf}). 
As in \cite{GsponerJMP}, we discuss a regularization approach in describing the electric monopole, magnetic dipole, electron singularity, and self-energy. Our analysis is more rigorous and for a class of regularizations not restricted to convolution techniques.

To begin with, we rewrite the potential \cite[Equation (6.1)]{GsponerJMP} as generalized function on $\R^3$ in the form
\beq\label{el_pot}
 \phi(x) = \e \frac{H(|x|)}{|x|},
\eeq
where, similarly as with \eqref{Phi}, $H$ denotes a generalized function, replacing Gsponer's $\Upsilon$, associated to the Heaviside function $\theta$ as shown in Section \ref{HUps}, so that $\phi$ is associated, but not equal, to the Coulomb potential. 

\begin{assumption}\label{AssH}
In more detail, we assume $H$ to be represented  by a family $(H_\eps)_{\eps \in (0,1]}$ of smooth functions $H_\eps \col \R \to \R$ with the following properties (slightly different from the definition in \cite[Subsection 3.3.1]{Colombeau:92}): 
\begin{enumerate}

\item[(i)] $H_\eps \geq 0$ and $H_\eps' \geq 0$,

\item[(ii)] $H_\eps(r) = 0$ if $r \leq \eps$, 

\item[(iii)] $H_\eps(r) = 1$ if $r \geq 2 \eps$.

\end{enumerate}
\end{assumption}

Note that (ii) enables us to define the right-hand side in \eqref{el_pot} via the smooth representatives $x \mapsto H_\eps(|x|)/|x|$, $\R^3 \to \R$. An example of such a regularizing family can be achieved in the form 
$$
   H_\eps(r) := \int_\R \theta(r-s) \frac{1}{\eps} \chi\left(\frac{s}{\eps}\right) \, d s =  
       \int_0^r \frac{1}{\eps} \chi\left(\frac{s}{\eps}\right) \, d s = \int_0^{r/\eps} \chi(t) \, d t
$$ 
with a mollifier $\chi \in \S(\R)$ such that $\int_\R \chi(s) \, ds = 1$, $\chi \geq 0$, and $\supp \chi \subseteq [1,2]$.

We may further calculate in the usual way, either with representatives or Colombeau classes, and obtain the electric field (\cite[Equation (6.2)]{GsponerJMP})
$$
   E(x) = - \nabla \phi(x) = \e \left( \frac{H(|x|)}{|x|^2}  - \frac{H'(|x|)}{|x|}\right) \frac{x}{|x|}
$$ 
as well as the charge density $\rho$ (\cite[Equation (6.3)]{GsponerJMP}) following from
$$
  4 \pi \rho(x) = \div E(x) = - \e \frac{H''(|x|)}{|x|}.
$$

\begin{proposition} We have $\rho \approx \e \de_0$.
\end{proposition}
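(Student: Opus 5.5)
Taking the representatives literally, $4\pi\rho_\eps(x) = -\e\, H_\eps''(|x|)/|x|$ with $H_\eps$ as in Assumption \ref{AssH}. This is smooth and supported in the shell $\eps \le |x| \le 2\eps$, since $H_\eps''$ vanishes for $r\le\eps$ and for $r \ge 2\eps$. We must show $\dis{\rho_\eps}{\vphi} \to \e\,\vphi(0)$ for every $\vphi \in \D(\R^3)$.

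First pass to spherical coordinates $x = r\om$. Writing $M_\vphi(r) := \int_{S^2} \vphi(r\om)\, d\om$, which is smooth in $r\ge0$ with $M_\vphi(0) = 4\pi\vphi(0)$, we get
$$
  \dis{\rho_\eps}{\vphi} = -\frac{\e}{4\pi}\int_\eps^{2\eps} H_\eps''(r)\, \frac{1}{r}\, r^2 M_\vphi(r)\, dr = -\frac{\e}{4\pi}\int_\eps^{2\eps} H_\eps''(r)\, r\, M_\vphi(r)\, dr .
$$
Next integrate by parts once, using that $H_\eps'$ vanishes at both endpoints $r=\eps$ and $r=2\eps$ because $H_\eps$ is constant outside $[\eps,2\eps]$. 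This gives
$$
  \dis{\rho_\eps}{\vphi} = \frac{\e}{4\pi}\int_\eps^{2\eps} H_\eps'(r)\,\big(M_\vphi(r) + r M_\vphi'(r)\big)\, dr .
$$
By Assumption \ref{AssH}(i),(ii),(iii), $H_\eps' \ge 0$ and $\int_\eps^{2\eps} H_\eps' = H_\eps(2\eps) - H_\eps(\eps) = 1$. So $H_\eps'$ is a probability density concentrated in $[\eps,2\eps]$.

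The function $g(r) := M_\vphi(r) + rM_\vphi'(r)$ is continuous, with $g(0) = 4\pi\vphi(0)$. Hence
$$
  \Big|\dis{\rho_\eps}{\vphi} - \e\,\vphi(0)\Big| \le \frac{|\e|}{4\pi}\,\sup_{\eps\le r\le 2\eps}|g(r)-g(0)| \to 0 .
$$
This proves $\rho \approx \e\de_0$.

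The only delicate point is the $1/|x|$ singularity, together with the fact that $H_\eps''$ is not sign-definite and its $L^1$ norm may grow as $\eps\to0$. Both are handled by two observations: the Jacobian $r^2$ absorbs $1/r$, and the single integration by parts moves the derivative onto the test side, so only the nonnegative, unit-mass $H_\eps'$ remains. For rigor one should also check that $M_\vphi$ is $C^1$ up to $r=0$, which is immediate by differentiating under the integral sign.
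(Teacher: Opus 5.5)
Your proof is correct and follows essentially the same route as the paper: you pass to spherical means $M\vphi$, let the Jacobian absorb $1/r$, and integrate by parts once to reach $\int H_\eps'(r)\,\big(M\vphi + r\,(M\vphi)'\big)\,dr$. The only difference is the final limit: the paper extends $M\vphi$ evenly to a test function on $\R$ and invokes $H_\eps' \to \de_0$ in $\D'(\R)$, which needs only $H\approx\theta$ and $\supp H_\eps\subseteq[\eps,\infty[$; you instead prove this concentration directly from $H_\eps'\geq 0$ and unit mass on $[\eps,2\eps]$, which uses more of Assumption \ref{AssH} but is self-contained and even gives an $O(\eps)$ rate.
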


\proof The action of $\rho_\eps$ on a test function $\vphi \in \D(\R^3)$ is given by
\begin{multline*}
  - \frac{4 \pi}{\e} \dis{\rho_\eps}{\vphi} =  \int_{\R^3} \frac{H_\eps''(|x|)}{|x|} \vphi(x) \, dx 
  = \int_0^\infty \int_{S^2} \frac{H_\eps''(r)}{r} \vphi(r \om) \, r^2 \, d\om \, d r
  = \int_0^\infty r H_\eps''(r) \overbrace{\int_{S^2} \vphi(r \om) \, d \om}^{(M \vphi)(r)} \, d r\\
  =  \underbrace{ \left.  H_\eps'(r) r  (M \vphi)(r)  \right|_0^\infty}_{0} 
  - \;    \int_0^\infty H_\eps'(r) \big( r (M\vphi)(r)\big)' \, dr\\ 
  =    \int_0^\infty H_\eps'(r) \big( (M\vphi)(r) + r (M\vphi)'(r) \big) \, dr.
\end{multline*}
Note that $M\vphi$ can be extended to $r < 0$ as an even function (by symmetry of $S^2$) and that $\supp H_\eps \subseteq [\eps,\infty[$ (by property (ii) above), hence we obtain
$$
   - \frac{4 \pi}{e} \dis{\rho_\eps}{\vphi} =  -  \int_{-\infty}^\infty H_\eps'(r) \big( (M\vphi)(r) + r (M\vphi)'(r) \big) \, dr
   = -  \dis{H_\eps'}{M\vphi + r (M\vphi)'}.
$$
Since $H_\eps' \to \de_0$ as $\eps \to 0$, we arrive at
$$
    - \frac{4 \pi}{\e} \dis{\rho_\eps}{\vphi} \to -  (M\vphi)(0) = - \int_{S^2} \vphi(0) \, d\om = 
    - 4 \pi \vphi(0) = - 4 \pi \dis{\de_0}{\vphi}.
$$
\qed

The electric self-energy of the point charge is defined by integration of 
$$
   |E(x)|^2 = \frac{\e^2}{|x|^2} \left( \frac{H(|x|)}{|x|} - H'(|x|) \right)^2,  
$$
namely, as the generalized number 
$$
   \Uel := \frac{1}{8 \pi} \int_{\R^3} |E(x)|^2 \, dx.
$$
A typical representative of $\Uel$, using spherical coordinates, is given by
$$
   \Uel^\eps = \frac{e^2}{2} \int_0^\infty \left( \frac{H_\eps(r)}{r} - H_\eps'(r)\right)^2 \, dr
   = \frac{\e^2}{2} \int_0^\infty \frac{H_\eps(r)^2}{r^2} \, dr  + 
   \frac{\e^2}{2} \int_0^\infty H_\eps'(r)^2 \, dr - 
   \e^2 \int_0^\infty \frac{H_\eps(r) H_\eps'(r)}{r}  \, dr,
$$
where integration by parts in the last term shows $ \int_0^\infty H_\eps(r) H_\eps'(r)/r  \, dr =  \int_0^\infty H_\eps(r)^2/r^2 \, dr/2$, hence
$$
    \Uel^\eps =  \frac{e^2}{2} \int_0^\infty H_\eps'(r)^2 \, dr.
$$

\begin{theorem}\label{UelThm} The generalized number $\Uel$ is infinite in the sense that
$$
   \Uel^\eps \to \infty \quad (\eps \to 0).
$$
\end{theorem}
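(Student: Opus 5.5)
By Assumption \ref{AssH}(ii) and (iii), the derivative $H_\eps'$ vanishes outside the interval $[\eps, 2\eps]$. Moreover, by the fundamental theorem of calculus,
$$
\int_\eps^{2\eps} H_\eps'(r)\, dr = H_\eps(2\eps) - H_\eps(\eps) = 1 - 0 = 1 .
$$
The idea is that a function with total integral $1$ concentrated on an interval of length $\eps$ must have a large $L^2$ norm. We make this quantitative with the Cauchy--Schwarz inequality on $[\eps,2\eps]$:
$$
1 = \Big(\int_\eps^{2\eps} 1\cdot H_\eps'(r)\, dr\Big)^2 \leq \int_\eps^{2\eps} 1^2\, dr \cdot \int_\eps^{2\eps} H_\eps'(r)^2\, dr = \eps \int_0^\infty H_\eps'(r)^2\, dr .
$$

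Combining this with the representation $\Uel^\eps = \frac{\e^2}{2}\int_0^\infty H_\eps'(r)^2\, dr$ derived just before the theorem, we obtain
$$
\Uel^\eps \geq \frac{\e^2}{2\eps} \to \infty \quad (\eps \to 0).
$$
This is the claim.

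Two points need checking. First, the integration by parts leading to that representation must produce no boundary terms. At $r=0$ this holds because $H_\eps$ vanishes near $0$. As $r\to\infty$, the boundary term is $-H_\eps(r)^2/(2r)$, which tends to $0$ since $H_\eps = 1$ for large $r$. Second, the bound should be independent of the chosen representative. Any other representative differs from this one by a negligible net, so the divergence holds at the level of the generalized number $\Uel$, which is what "infinite" means here.

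The statement only asks for divergence, so no step is a real obstacle. For a concrete rate, one can note the example $H_\eps(r) = \int_0^{r/\eps}\chi$, for which $H_\eps'(r) = \eps^{-1}\chi(r/\eps)$. This gives exactly
$$
\Uel^\eps = \frac{\e^2}{2\eps}\,\|\chi\|_{L^2}^2 ,
$$
so the lower bound $\e^2/(2\eps)$ is of the correct order.
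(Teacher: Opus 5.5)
Your proof is correct and follows the paper's argument: $H_\eps'$ is supported in $[\eps,2\eps]$ with $\int_\eps^{2\eps} H_\eps'(r)\,dr = 1$. Cauchy--Schwarz then gives $\int_\eps^{2\eps} H_\eps'(r)^2\,dr \geq 1/\eps$, hence $\Uel^\eps \geq \e^2/(2\eps) \to \infty$. Your side checks are also right, namely the vanishing boundary terms and the exact value $\Uel^\eps = \e^2\|\chi\|_{L^2}^2/(2\eps)$ for the model regularization.
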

\proof By assumptions (ii) and (iii) on $H_\eps$ we have $H_\eps' (r) = 0$ if $r < \eps$ or $r > 2 \eps$, hence
$$
 a_\eps : = \frac{2}{\e^2} \Uel^\eps = \int_\eps^{2 \eps} H_\eps'(r)^2 \, dr \geq 0.
$$ 
Recalling  $H_\eps(\eps) = 0$ and $H_\eps(2 \eps) = 1$, we have
$$
   1 = H_\eps(2 \eps)  - H_\eps(\eps) = \int_\eps^{2 \eps} H_\eps'(r) \, dr. 
$$
Thus we obtain from the Cauchy-Schwarz inequality,
$$
   a_\eps = \int_\eps^{2 \eps} H_\eps'(r)^2 \, dr = \frac{1}{\eps} \int_\eps^{2 \eps} 1^2 \, dr \cdot \int_\eps^{2 \eps} H_\eps'(r)^2 \, dr 
   \geq \frac{1}{\eps} \cdot \Big( \int_\eps^{2 \eps} (1 \cdot H_\eps'(r)) \, dr \Big)^2 = \frac{1}{\eps},
$$
i.e., $\Uel^\eps = \frac{e^2}{2} a_\eps  \geq \e^2/(2 \eps) \to \infty$ as $\eps \to 0$. \qed

As for the magnetic self-energy $\Uma$, considered as a generalized number, one may repeat the calculations in \cite[Section VII, Equations (7.1)-(7.14)]{GsponerJMP} with the Heaviside regularizations $H_\eps$ replacing $\Upsilon$. With $\mu$ denoting the norm of the magnetic moment, we have from \cite[Equation (7.14)]{GsponerJMP}
$$
   \Uma^\eps = - \left. \frac{\mu^2}{3 r^3} H_\eps(r)^2 \right|_0^\infty + \frac{\mu^2}{3} \int_0^\infty \frac{H_\eps'(r)^2}{r^2} \, dr.
$$
The boundary term vanishes, since $H_\eps(0) = 0$ and $H_\eps(r)^2/r^2 \to 0$ as $r \to \infty$ for every $\eps \in ]0,1]$. By assumptions (ii) and (iii), we may thus write
$$
  \Uma^\eps = \frac{\mu^2}{3} \int_\eps^{2 \eps} \frac{H_\eps'(r)^2}{r^2} \, dr.
$$
Observe that 
$$
    \frac{2}{e^2} \Uel^\eps =  \int_\eps^{2 \eps} H_\eps'(r)^2 \, dr \leq 
        \int_\eps^{2 \eps} \frac{H_\eps'(r)^2}{r^2} \, dr = \frac{3}{\mu^2} \Uma^\eps 
        \quad (0 < \eps < 1/2),
$$
and Theorem \ref{UelThm} leads to the following statement.

\begin{corollary} The generalized number $\Uma$ is infinite in the sense that
$$
   \Uma^\eps \to \infty \quad (\eps \to 0).
$$
\end{corollary}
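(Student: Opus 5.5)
The corollary follows by comparing $\Uma^\eps$ with $\Uel^\eps$ and then invoking Theorem \ref{UelThm}. The representative of the magnetic self-energy is
$$
  \Uma^\eps = \frac{\mu^2}{3} \int_\eps^{2 \eps} \frac{H_\eps'(r)^2}{r^2} \, dr .
$$
This form uses two facts: the boundary term vanishes because $H_\eps(r)=0$ near $r=0$ and $H_\eps=1$ for $r\ge 2\eps$, and Assumption \ref{AssH}(ii),(iii) confine the support of $H_\eps'$ to $[\eps,2\eps]$. Since the integrand is nonnegative, $\Uma^\eps \geq 0$.

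The key step is a pointwise comparison of the integrands. For $0<\eps<1/2$ and $r\in[\eps,2\eps]$ we have $r\le 2\eps<1$, so $1/r^2 \ge 1$. Hence
$$
  \frac{H_\eps'(r)^2}{r^2} \geq H_\eps'(r)^2 \quad \text{on } [\eps,2\eps].
$$
Integrating gives $\frac{3}{\mu^2}\Uma^\eps \ge \frac{2}{\e^2}\Uel^\eps$, which is the inequality displayed just before the corollary. The proof of Theorem \ref{UelThm} shows, via Cauchy--Schwarz, that $\frac{2}{\e^2}\Uel^\eps = a_\eps \ge 1/\eps$. Therefore
$$
  \Uma^\eps \geq \frac{\mu^2}{3\eps} \to \infty \quad (\eps \to 0).
$$

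A sharper constant is available by applying Cauchy--Schwarz directly to $1 = \int_\eps^{2\eps} r \cdot \bigl(H_\eps'(r)/r\bigr)\,dr$. This yields
$$
  \int_\eps^{2\eps} \frac{H_\eps'(r)^2}{r^2}\,dr \geq \Bigl(\int_\eps^{2\eps} r^2\,dr\Bigr)^{-1} = \frac{3}{7\eps^3},
$$
which is valid for all $\eps\in(0,1]$ and shows that $\Uma^\eps$ grows like $\eps^{-3}$. No step is a real obstacle. The only care needed is restricting to $\eps<1/2$ in the comparison, or using the direct Cauchy--Schwarz bound to avoid that restriction, and assuming $\mu\neq 0$.
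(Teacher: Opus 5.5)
Your proposal is correct and follows essentially the paper's route: the paper also reduces $\Uma^\eps$ to $\frac{\mu^2}{3}\int_\eps^{2\eps} H_\eps'(r)^2 r^{-2}\,dr$, compares this pointwise with the integrand of $\Uel^\eps$ for $0<\eps<1/2$, and then invokes Theorem \ref{UelThm}. Your additional direct Cauchy--Schwarz estimate $\int_\eps^{2\eps} H_\eps'(r)^2 r^{-2}\,dr \geq 3/(7\eps^3)$ is also correct; it gives the sharper growth rate $\eps^{-3}$ and removes the restriction $\eps<1/2$.
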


One could proceed as in \cite{GsponerJMP} with considerations of self-momentum, self-angular momentum, and spin, which would be a straightforward implementation of similar techniques and calculations as above.

\begin{remark}[Total self-energy and mass renormalization] As in \cite[Section X]{GsponerJMP} we can compare the total self-energy of the charged point particle 
$$
   \Uel + \Uma =  \frac{\e^2}{2} \int_0^\infty H'(r)^2 \, dr + \frac{\mu^2}{3} \int_0^\infty \frac{H'(r)^2}{r^2} \, dr
$$
to the usual relativistic energy $m c^2$ for a particle of mass $m$. Since $\Uel$ and $\Uma$ are generalized numbers this would lead to an equation of the form
$$
    m c^2 =  \frac{\e^2}{2} \int_\eps^{2 \eps} H_\eps'(r)^2 \, dr + \frac{\mu^2}{3} \int_\eps^{2 \eps} \frac{H_\eps'(r)^2}{r^2} \, dr.
$$
Thus we either have to consider $m$ also as a generalized number or, in case $m$ refers to a measured quantity, this provides us with the opportunity of a ``mass renormalization'' similar to \cite[Section X]{GsponerJMP}. In fact, assuming in addition continuity $\eps \mapsto H_\eps$ with respect to (local) Sobolev $H^1$ norms, the fact $\Uel^\eps + \Uma^\eps \to \infty$ as $\eps \to 0$ guarantees the following: For any given (measured) value of $m c^2$ in the range of the map $\eps \mapsto \Uel^\eps + \Uma^\eps$, a suitable choice of $0 < \eps_0 \leq 1$ will yield the equation
$$
   \Uel^{\eps_0} + \Uma^{\eps_0} = m c^2.
$$
Recall that certain mass renormalizations are also carried out in the classical derivation of the Lorentz-Dirac equation (\cite[Section 8.4]{Thirring:01} and \cite[Section 4.3]{Parrott}).
\end{remark}

%%%%%%%%%%%%%%%%%%%%%%%%%%%%%%%%%%%%%%%%%%%%
%%%%%%%. Upsilon-Zeugs
%%%%%%%%%%%%%%%%%%%%%%%%%%%%%%

\section{Justification of $H$ as a model for Gsponer's $\Upsilon$}\label{HUps} 
The original definition of the generalized vector field $\Phi$ in \cite[Equation (4.1)]{GEW} uses a certain generalized function denoted by $\Upsilon$ in place of $H$ in Equation \eqref{Phi}. The function $\Upsilon$ was introduced in \cite{GPOT} and employed in \cite{GEW} and \cite{GEJP} (see also \cite{GEJPCorr}), but lacks a clear definition, at least to our knowledge. In particular, \cite[Theorem in Section 6]{GPOT} characterizes $\Upsilon$ in the form 
\beq\label{UpsEqu}
  \Upsilon (\xi) = \xi \vphi(\xi)
\eeq
via a scalar ``function'' $\vphi$ satisfying the differential equation 
\beq\label{Gsphi}
  \vphi'(\xi) + \frac{1}{\xi} \vphi(\xi) = \frac{1}{\xi} \de_0(\xi). 
\eeq
We will discuss the meaning and solution of this equation in a few variants, which will suggest more or less that $\Upsilon$ is associated with the Heaviside function as seen in Equations \eqref{U=theta}, \eqref{UassozTheta}, \eqref{UassozTheta2}, and \eqref{UassozTheta3} below.

\subsection{Equation \eqref{Gsphi} re-interpreted distributionally:}\label{DistUps}

Avoiding distributional products and having $\xi \vphi(\xi) = \Upsilon(\xi)$ as the main object in mind, we could consider \eqref{Gsphi} formally multiplied by $\xi$ as the simple differential equation for $\vphi\in\mathcal{D}'(\mathbb{R})$ in the form 
\begin{equation}\label{eq2}
    \xi\varphi'(\xi)+\varphi(\xi)=\delta_0(\xi).
\end{equation}
The left-hand side equals $(\xi \vphi(\xi))' = \Upsilon'(\xi)$, hence we conclude immediately that
$$
   \Upsilon (\xi) = \xi  \vphi(\xi) = \theta(\xi) + d
$$
with some constant $d$. The requirement $\Upsilon(\xi) = 1$ for $\xi > 0$ (e.g., in \cite{GEW,GPOT}) now forces $d = 0$, i.e.
\beq\label{U=theta}
   \Upsilon = \theta.
\eeq

\subsection{Re-interpreting Equation \eqref{Gsphi} with generalized functions:} 

We will employ two strategies, one resembling Subsection \ref{DistUps}, the other more directly referring to Equation \eqref{Gsphi}.

Denote by $\mathrm{id}$ the (smooth) identity function $\mathrm{id} \col \xi \mapsto \xi$ on $\R$. 
To begin with, we recall that for any $V \in \G(\R)$ with $V \approx \vp(1/\xi)$ we necessarily have $\mathrm{id} \cdot V \approx \mathrm{id} \cdot \vp(1/\xi) = 1$. But we cannot expect $\mathrm{id} \cdot V = 1$, in fact, there is no generalized function $V$ on $\R$ such that $\mathrm{id} \cdot V = 1$ holds: Suppose $V$ is represented by $(V_\eps)_{0 < \eps \leq 1}$, then for any compact subset $K \subseteq \R$ and any $q > 0$, we could find some $\eps_0 > 0$ such that 
$$
     |\xi V_\eps(\xi) - 1| \leq \eps^q \quad (\xi \in K, 0 < \eps \leq \eps_0),
$$
which is obviously a contradiction, if we take $K := \{ 0\}$.

\subsubsection{By avoiding an interpretation of $1/\xi$}

In view of the above observation, let us re-interpret Equation \eqref{Gsphi} in the context of generalized functions with $\vphi \in \G(\R)$ similarly as in the distribution theoretic setting. As right-hand side, we use a generalized Dirac function $D$,  represented by a \emph{strict delta net} in the sense of \cite[Definition 7.1]{O:92}, i.e., a family $(\rho_\eps)_{0 < \eps \leq 1}$ of test functions $\rho_\eps \in \D(\R)$ such that $\int_\R \rho_\eps(\xi) d \xi = 1$, $\int_\R |\rho_\eps(x)| \, dx$ bounded independently of $\eps$, and $\bigcap_{0 < \eps \leq 1} \supp(\rho_\eps) = \{0\}$. The corresponding equation then reads
\beq\label{eq3}
        \mathrm{id} \cdot \vphi' + \vphi = D.
\eeq
By essentially the same reasoning as in distribution theory, Equation \eqref{eq3} implies
$$
    \Upsilon' = (\mathrm{id} \cdot \vphi)' = \mathrm{id} \cdot \vphi' + \vphi = D.
$$
Defining
$$
  H_\eps(\xi) :=  \int_{-1}^\xi \rho_\eps(x) \, d x \quad (\xi \in \R, 0 < \eps \leq 1),
$$
the family $(H_\eps)_{0 < \eps \leq 1}$ represents a generalized function $H$ with the property $H' = D$, hence $(\Upsilon - H)' = \Upsilon' - H' = D - D = 0$ and we obtain
$$
    \Upsilon_\eps(\xi) = H_\eps + d_\eps,
$$
where $(d_\eps)_{0 < \eps \leq 1}$ is a moderate family of constants, thus represents a generalized constant $d$ (\cite[Proposition 1.2.35]{GKOS:01}), i.e.,
$$
   \Upsilon = H + d.
$$

We certainly have $H \approx \theta$, since for any test function $f \in \D(\R)$, integration by parts yields
\begin{multline*}
  \dis{H_\eps}{f} = \int_\R H_\eps(\xi) \, f(\xi) \, d\xi
  = \left. \Big(H_\eps(\xi) \int_{-\infty}^\xi f(x) \, dx \Big) \right|_{-\infty}^\infty - \int_\R \rho_\eps(\xi) \int_{-\infty}^\xi f(x) \, dx \, d\xi\\
  = \int_{-1}^\infty \!\!\rho_\eps(x) \, dx \!\! \int_{-\infty}^\infty \!\! f(x) \, dx \,- \int_\R \rho_\eps(\xi) \!\! \int_{-\infty}^\xi \!\! f(x) \, dx\, d \xi
  \quad \stackrel{\eps \to 0}{\longrightarrow} \quad1 \cdot  \int_{-\infty}^\infty \!\! f(x) \, dx\;\; - \int_{-\infty}^0 \!\! f(x) \, dx
  = \int_0^\infty f(x)\, dx,
\end{multline*}
where the limit in the first term is justified by the properties $\supp(\rho_\eps) \to \{0\}$ and $\int_\R \rho_\eps(\xi) d \xi = 1$, while for the second term we may use the fact that the supports of $\rho_\eps$ are contained in a compact set independently of $\eps$, so that $\dis{\rho_\eps}{h} \to h(0)$ as $\eps \to 0$ holds for any $h \in \Cinf(\R)$. 

We note that, thanks to the properties of the strict delta net, for an arbitrarily fixed $\xi_0 > 0$ we can find $\eps_0 \in \, ]0,1]$ such that $H_\eps(\xi) = 1$ for all $\eps < \eps_0$ and $\xi \geq \xi_0$. Thus, employing once again the requirement $\Upsilon(\xi) = 1$ for $\xi > 0$ from \cite{GEW,GPOT},  we may conclude that $d = 0$ and therefore,
\beq\label{UassozTheta}
   \Upsilon = H \approx \theta.
\eeq

\subsubsection{By interpretating $1/\xi$ essentially as $\vp(1/\xi)$}

With the notation $D$ and $H$ as introduced above, let us now experiment with yet another variant of Equation \eqref{Gsphi}. Let $V$ denote a generalized function on $\R$ such that
$$
    V \approx \vp(1/\xi)
$$
and represented by the regularization family $(V_\eps)_{0 < \eps \leq 1}$. Then Equation \eqref{Gsphi} reads
\beq\label{eq4}
     \vphi' + V \vphi = V D
\eeq
and we may supply it with an initial condition $\vphi(1) = c$ with a generalized constant $c$ represented by the moderate family $(c_\eps)_{0 < \eps \leq 1}$ of real numbers. On the level of representatives, the initial value problem reads
$$
   \vphi_\eps' + V_\eps \vphi_\eps = V_\eps \rho_\eps + n_\eps, \quad \vphi_\eps(1) = c_\eps + m_\eps,
$$
where $(n_\eps)$ is a negligible family of smooth functions and $(m_\eps)$ is a negligible family of real numbers. At fixed and arbitrary $0 < \eps \leq 1$, the unique solution is given by the smooth function
\beq\label{SolRep}
   \xi \mapsto  (c_\eps + m_\eps) e^{- \int_1^\xi V_\eps(y) \, dy} + 
     \int_1^\xi e^{\int_\xi^z V_\eps(y) \,dy} (V_\eps(z) \rho_\eps(z) + n_\eps(z))\, dz.
\eeq
This defines a moderate family of smooth functions under the condition that for all $a, b \in \R$ an estimate
\beq\label{IntBo}
   \left| \int_a^b V_\eps(y) \, dy \right| = O\left(\log\left(\frac{1}{\eps}\right)\right)
\eeq
holds. In this case, the terms involving $n_\eps$ and $m_\eps$ in the above formula are negligible and we may therefore define a representative $(\vphi_\eps)$ of a solution $\vphi \in \G(\R)$ to Equation \eqref{eq4} by
\beq\label{phiRep}
   \vphi_\eps(\xi) :=  c_\eps e^{- \int_1^\xi V_\eps(y) \, dy} +  \int_1^\xi e^{\int_\xi^z V_\eps(y) \,dy} V_\eps(z) \rho_\eps(z)\, dz.
\eeq
We obtain uniqueness of the solution in $\G(\R)$ under the same condition \eqref{IntBo}, since any solution of the homogenous equation $\vphi' + V \vphi = 0$ in $\G(\R)$ with negligible initial value necessarily has a negligible family of solution representatives by the formula in \eqref{SolRep} (this is similar to the more general situation described in \cite[Theorem 1.5.2]{GKOS:01}). The asymptotic condition reminds also of a H\"older-Zygmund regularity property and techniques from \cites{Hoermann:04} could be employed to analyze qualitative properties of the solution.

\begin{example}\label{VasVP}
Embedding $\vp(1/\xi)$ into $\G(\R)$ as generalized function $V$ via a model delta net $\mu_\eps(y) := \mu(y/\eps)/\eps$ with $\mu \in \D(\R)$, $\int \mu(y) \, dy = 1$, by defining the representative $(V_\eps)_{0 < \eps \leq 1}$ in the form $V_\eps := \vp(1/\xi) * \mu_\eps$, provides us with an example satisfying \eqref{IntBo}. Indeed, we have
\begin{multline*}
   \left| \int_a^b V_\eps(y) \, dy \right| = \left| \int_a^b \int_\R (\log |z|) \mu_\eps'(y - z) \, dz \, dy \right|
   =  \left| \int_a^b \diff{y} \left( \int_\R (\log |z|) \mu_\eps(y - z) \, dz \right) dy \right|\\
   = \left| \int_\R (\log |y|) \mu_\eps(b - y) \, dy -  \int_\R (\log |y|) \mu_\eps(a - y) \, dy \right| \\
   =  \left| \int_\R (\log |b - y|) \mu_\eps(y) \, dy -  \int_\R (\log |a - y|) \mu_\eps(y) \, dy \right|.
\end{multline*} 
In case $a \neq 0$ and $b \neq 0$ we obtain $| \log |b| - \log |a| |$ in the limit $\eps \to 0$, hence the integral is bounded independently of $\eps$. The same is trivially true in case $a=0$ and $b = 0$. Finally, in case $a = 0$ and $b \neq 0$ (which is analogous to the case $a \neq 0$ and $b =0$) we have for some $\eps_0 > 0$ for all $0 < \eps < \eps_0$ the estimate
\begin{multline*}
     \left| \int_0^b V_\eps(y) \, dy \right| = 
     \left| \int_\R (\log |b - y|) \mu_\eps(y) \, dy -  \int_\R (\log |y|) \mu_\eps(y) \, dy \right| \\
     \leq  \left| \int_\R (\log |b - y|) \mu_\eps(y) \, dy \right|  + \left| \int_\R (\log |y|) \mu_\eps(y) \, dy \right| 
     \leq 2 | \log |b| | + \int_\R | \log |y| | \, |\mu_\eps(y)| \, dy,
\end{multline*}
since $ \int_\R (\log |b - y|) \mu_\eps(y) \, dy \to \log |b|$ as $\eps \to 0$. In the last term we use the change of variables $y = \eps z$ and obtain
\begin{multline*}
    \int_\R | \log |y| | \, |\mu_\eps(y)| \, dy = \int_\R | \log |\eps z | | \, |\mu(z)| \, dz = 
        \int_\R | \log | z | + \log \eps | \, |\mu(z)| \, dz  \\
        \leq  \int_\R | \log |z | | \, |\mu(z)| \, dz +  \int_\R | \log \eps | \, |\mu(z)| \, dz
        =  \int_\R | \log |z | | \, |\mu(z)| \, dz + \log\left(\frac{1}{\eps}\right)  \int_\R  |\mu(z)| \, dz,
\end{multline*}
from which we conclude that $ \left| \int_0^b V_\eps(y) \, dy \right| = O(\log(1/\eps))$ as $\eps \to 0$.
\end{example}

It remains to analyze the solution $\vphi$, given by the representative according to \eqref{phiRep} and to determine $\Upsilon = \mathrm{id} \cdot \vphi$ in a situation where $V$ can be considered an appropriate model of $1/\xi$. The minimal requirement  $V \approx \vp(1/\xi)$ without further specification seems too vague to obtain a distributional limit for the regularized solutions according to \eqref{phiRep}. Instead of pursuing a general search for suitable properties replacing the impossible condition $\mathrm{id} \cdot V = 1$, we will be content with a little further analysis of the special case considered in Example \ref{VasVP}, where $V$ is an embedding of the Cauchy principal value.

\begin{example}[Example \ref{VasVP} revisited.] Gsponer's requirement $\Upsilon(\xi) = 1$ for $\xi > 0$ from \cite{GEW,GPOT} implies $1 = \Upsilon(1) = 1 \cdot \vphi(1) = c$ (in the sense of generalized numbers), hence we may simplify matters by assuming $c_\eps = 1$ in \eqref{phiRep}. We then have
$$
   \psi_\eps := \mathrm{id} \cdot \vphi_\eps = S_\eps + T_\eps \quad\text{with}\quad 
       S_\eps(x) := x e^{- \int_1^x V_\eps(y) \, dy} \quad\text{and}\quad
       T_\eps(x) := x \int_1^x e^{\int_x^z V_\eps(y) \,dy} V_\eps(z) \rho_\eps(z)\, dz
$$
and recall  $V_\eps := \vp(1/\xi) * \mu_\eps$ with the model delta net $\mu_\eps$ as in Example \ref{VasVP}. Let us now look at the specific case, where $\mu$ is nonnegative and also $\rho_\eps(x) := \rho(x/\eps)/\eps$ is a model delta net with $\rho \in \D(\R)$ such that $\int_\R \rho(x) \, dx = 1$, and study the convergence properties of $\psi_\eps$ as $\eps \to 0$  in the sense of $\D'(\R)$ for this particular configuration in terms of $S_\eps$ and $T_\eps$ separately.

We claim that
\beq\label{Claima}
   S_\eps \to \sgn. 
\eeq

Let $f \in \D(\R)$ and consider
\beq\label{Saction}
    \dis{S_\eps}{f} = \int_\R f(x) x e^{- \int_1^x V_\eps(y) \, dy} \,dx.
\eeq
We have $S_\eps(0) = 0$ and learn from Example \ref{VasVP} that we have for every $x \neq 0$ pointwise convergence  
$$
    x e^{- \int_1^x V_\eps(y) \, dy} \to x e^{- \log|x|} = \frac{x}{|x|} = \sgn(x) \quad (\eps \to 0).
$$
To obtain an upper bound for the integrand, we note that $V_\eps = \diff{x}(\log |.| * \mu_\eps)$ and hence 
\begin{multline*}
   \int_1^x V_\eps(y) \, dy = (\log |.| * \mu_\eps)(x) - (\log |.| * \mu_\eps)(1)\\ = 
   \int_\R \big(\log|x - r| - \log|1-r|\big) \mu\Big(\frac{r}{\eps}\Big) \,\frac{dr}{\eps}
   = \int_\R  \big(\log|x - \eps s| - \log|1- \eps s|\big) \mu(s) \,ds\\
   =  \int_\R  \Big(\log \Big|\frac{x}{\eps} - s\Big| - \log \Big|\frac{1}{\eps}- s \Big|\Big) \mu(s) \,ds 
   = L\Big(\frac{x}{\eps}\Big) - L\Big(\frac{1}{\eps}\Big),
\end{multline*}
where we have put $L(y) := (\log |.| * \mu)(y)$. Choose $\be > 0$ such that $\supp(\mu) \subseteq [-\be,\be]$. 
We observe in case $y \neq 0$ that $\log |y-z| = \log|y| + \log |1 - z/y|$ and $\int_\R \mu(z) \, dz = 1$ yield 
$$
   L(y) = \int_\R \mu(z) \log|y-z| \, dz = \log|y| + \int_{-\be}^\be \log|1 - z/y| \mu(z) \, dz.
$$
If $|y| > \be$ then the absolute value of the last term  is bounded by
$$
   \int_{-\be}^\be \Big| \log|1 - z/y| \Big| |\mu(z)| \, dz \leq \|\mu\|_{L^1} \cdot 
       \sup_{|z| \leq \be} |\log |1 - z/y| |,
$$
thus
$$
    L(y) = \log|y| + O(1/|y|) \quad (\be < |y| \to \infty).
$$
This gives $\eps e^{L(1/\eps)} \to 1$ as $1/\be > \eps \to 0$ and $\frac{|x|}{\eps} e^{- L(x/\eps)} = \frac{|x|}{\eps} \frac{\eps}{|x|} e^{O(\eps/|x|)} = O(1)$ as $|x|/\be > \eps \to 0$, which implies an upper bound independent of $\eps$ for the absolute value of
$$
    S_\eps(x) = x e^{- \int_1^x V_\eps(y) \, dy} = x e^{L(1/\eps)- L(x/\eps)}
    = \eps e^{L(1/\eps)} \cdot \frac{x}{\eps} e^{- L(x/\eps)}
$$
in case $|x| \geq \be \eps$; in case $|x| < \be \eps$, an inspection of the arguments in Example \ref{VasVP} (using $\int |\mu| = \int \mu$) give an estimate $\left| \int_1^x V_\eps(y) \, dy \right| \leq c + \log(1/\eps)$ with a constant $c \geq 0$ independent of $\eps$. Therefore $|S_\eps(x)| \leq \be \eps \, e^{c + \log(1/\eps)} = \be e^c \eps \cdot (1/ \eps) = O(1)$ as $\eps \to 0$. In summary, we conclude that the integrand in \eqref{Saction} is bounded by a constant (independent of $\eps$) times $|f(x)|$, thus dominated convergence applies and yields $ \dis{S_\eps}{f} \to  \dis{\sgn}{f}$ as $\eps \to 0$. Thus we have proved the claim in \eqref{Claima}.

We further claim that, with the notation introduced above,
\beq\label{Claimb}
    T_\eps \to C(\mu,\rho) \,\check{\theta}, 
   \text{ where $\check{\theta}(x) := \theta(-x)$ and $C(\mu,\rho) := \int_\R e^{L(s)} V_1(s) \rho(s)\, ds$}.
\eeq
(Recall that both $L$ and $V_1$ involve $\mu$ in their defining integrals, hence the notation $C(\mu,\rho)$.)

We start out by noting that $V_\eps(z) = V_1(z/\eps)/\eps$ and, as above, we also have $\int_x^z V_\eps(y) \,dy = L(z/\eps) - L(x/\eps)$. A change of variables $z = \eps s$ thus gives
$$
   T_\eps(x) = x e^{- L(x/\eps)} \int_1^x e^{L(z/\eps)} \frac{V_1(z/\eps)}{\eps} \frac{\rho(z/\eps)}{\eps}\, dz
   = \frac{x}{\eps} e^{- L(x/\eps)}  \int_{1/\eps}^{x/\eps} e^{L(s)} V_1(s) \rho(s) \, ds.
$$
If $x > 0$, then the last integral will miss the (compact) support of $\rho$ as soon as $\eps$ is sufficiently small, hence we have pointwise convergence $T_\eps(x) \to 0$ on the positive half line. If $x < 0$, then we have seen earlier that $ \frac{x}{\eps} e^{- L(x/\eps)} \to 1$ as $\eps \to 0$; in addition, the intervall $[x/\eps,1/\eps]$ will contain the entire support of $\rho$ as soon as $\eps$ is sufficiently small, hence we have pointwise convergence
$$
    T_\eps(x) \to  \int_\R e^{L(s)} V_1(s) \rho(s) \, ds = C(\mu,\rho) \quad (x < 0, \eps \to 0).
$$
Boundedness (uniformly with respect to $\eps$ and $x$) of the integral factor in the above expression for $T_\eps$ is obvious and is deduced for the other factor from the boundedness of $S_\eps$. Therefore, Lebesgue's dominated convergence theorem implies $\dis{T_\eps}{f} \to  \dis{C(\mu,\rho) \check{\theta}}{f}$ as $\eps \to 0$ for every $f \in \D(\R)$, which proves claim \eqref{Claimb}.

In summary, we obtain the distributional convergence
$$
 \psi_\eps = \mathrm{id} \cdot \vphi_\eps = S_\eps + T_\eps \to \sgn + C(\mu,\rho) \, \check{\theta} \quad (\eps \to 0).
$$
If we can choose $\mu$ and $\rho$ such that $C(\mu,\rho) = 1$ holds---so that also Gsponer's requirement $\Upsilon(\xi) = 1$ for $\xi > 0$ from \cite{GEW,GPOT} is met---then this achieves a match with \eqref{UassozTheta}, namely
\beq\label{UassozTheta2}
   \Upsilon = \mathrm{id} \cdot \vphi \approx \sgn + \, \check{\theta} = \theta.
\eeq

\emph{It remains to make sure that a choice of $\mu$ and $\rho$ with $C(\mu,\rho) = 1$ is indeed possible:} Take an arbitrary $\mu \in \D(\R)$ with $\int \mu = 1$. With the notation introduced above, we have $U := (e^L)' = e^L \cdot V_1$, which we may consider as a distribution on $\R$ acting on a test function $f \in \D(\R)$ by 
$$
    \dis{U}{f} = \int_\R e^{L(s)} V_1(s) f(s) \, ds = - \int e^{L(s)} f'(s) \, ds.
$$
Note that $C(\mu,\rho) = \dis{U}{\rho}$, if $\rho \in \D(\R)$ with $\int \rho =1$.
The affine hyperplane of these mollifiers  can be written as $\ker(1) + \{ \rho_0\}$, where $\rho_0 \in \D(\R)$ is any function with $\int \rho _0 = 1$. Since $U$ is not the constant function, we cannot have $\ker(1) \subseteq \ker U$, hence there exists some $\chi_0 \in \ker(1)$ with $\dis{U}{\chi_0} \neq 0$. Now, $\chi_0 \in \ker(1)$ means $\chi_0 \in \D(\R)$ and $\int \chi_0 = 0$, so that 
$$
   \rho := \rho_0 + \frac{1 - \dis{U}{\rho_0}}{\dis{U}{\chi_0}} \chi_0
$$
defines a test function $\rho \in \D(\R)$ with $\int \rho =  \int \rho_0 + \frac{1 - \dis{U}{\rho_0}}{\dis{U}{\chi_0}} \int \chi_0 = 1$. Finally, we also have
$$
   C(\mu,\rho) = \dis{U}{\rho} = \dis{U}{\rho_0} +  \frac{1 - \dis{U}{\rho_0}}{\dis{U}{\chi_0}} \dis{U}{\chi_0} = 1.
$$

\end{example}

%\vspace{1cm}
%
%\todo{... obsolet ... }
%multiply both sides by $\mathrm{id}$ and calculate for the generalized function $\Upsilon := \mathrm{id} \cdot \vphi$, within the algebra of generalized functions, \todo{argument includes $1 = \mathrm{id} \cdot V$, which cannot be true; only $1 \approx \mathrm{id} \cdot V$ is possible ...}
%$$
%    \Upsilon' = (\mathrm{id} \cdot \vphi)' = \mathrm{id} \cdot \vphi' + \vphi = \mathrm{id} \cdot \vphi' +  (\mathrm{id} \cdot V) \cdot \vphi
%    = \mathrm{id} \cdot (\vphi' + V \vphi) = \mathrm{id} \cdot (V D) = \mathrm{id} \cdot V \cdot D = D.  
%$$
%We have $H' = D$, hence $(\Upsilon - H)' = \Upsilon' - H' = D - D = 0$ and we obtain
%$$
%    \Upsilon_\eps(\xi) = H_\eps + d_\eps,
%$$
%where $(d_\eps)_{0 < \eps \leq 1}$ is a moderate family of constants, thus represents a generalized constant $d$, i.e.,
%$$
%   \Upsilon = H + d.
%$$
%
%\todo{Bis hierher wahrscheinlich obsolet ...}

\subsection{Remarks on attempting to interpret Equation \eqref{Gsphi} with distributional products:}\label{ODE_DP}\label{IntUps}

When interpreting the factors $1/\xi$ as Cauchy principal value $\vp(\frac{1}{\xi})$, then the right-hand side in \eqref{Gsphi} could be considered as distributional product $\vp(\frac{1}{\xi}) \cdot \delta_0(\xi)$ in the context of strict regularization products (cf.\ \cite[\pg 7]{O:92}) and gives the value $- \de_0'/2$ (see \cite[Equation (7.7)]{O:92}). Therefore, Equation \eqref{Gsphi}  would then read
\beq\label{DPEqu}
     \vphi'(\xi) + \vp\Big(\frac{1}{\xi}\Big) \cdot \vphi(\xi) = - \frac{1}{2} \de_0'(\xi),
\eeq
which still leaves the meaning of the product $\vp(1/\xi)  \cdot \vphi(\xi)$ undefined. We have to point out that non-associativity (\cite[Remark after Corollary 7.3]{O:92}) of the strict product that we have employed above (of type (7.4) as it is labeled in \cite{O:92}) now rules out to make use of the distributional result $\xi \cdot \vp(1/\xi) = 1$ as a standard multiplication by a $\Cinf$ function. However, supposing that there is some distribution satisfying \eqref{DPEqu}, then this implies for its restriction $\vphi_0$ to $\R\setminus \{0\}$ the equation $\xi \vphi_0'(\xi) + \vphi_0(\xi) = 0$ if $\pm \xi > 0$. Hence there are constants $c_-$ and $c_+$ such that
$$
    \vphi_0(\xi) = \frac{c_-}{\xi} \;\; (\xi < 0) \quad\text{and}\quad \vphi_0(\xi) = \frac{c_+}{\xi} \:\:(\xi > 0),
$$
thus $\vphi_0$  is homogeneous of degree $-1$ on $\R\setminus \{0\}$. By \cite[Theorem 3.2.4]{Hoermander:V1}, $\vphi_0$ possesses an extension to $\vphi \in \D(\R)$, which is determined up to multiple of $\de_0$. Employing the definition of the distributions $\xi_\pm^{-k}$ ($k \in \N$) as in \cite[Equation (3.2.5)]{Hoermander:V1}, we may write
$$
    \vphi(\xi) = c_- \xi_-^{-1} + c_+ \xi_+^{-1} + c_0 \de_0(\xi)
$$
with some constant $c_0$. Applying \cite[Equation (3.2.2)${}''$, page 69]{Hoermander:V1} gives $(\xi_\pm^{-1})' = - \xi_\pm^{-2} - \de_0'$ and hence
$$
    \vphi'(\xi) = - c_- \xi_-^{-2} - c_- \de_0'  - c_+ \xi_+^{-2}  - c_+ \de_0' + c_0 \de_0'(\xi) = 
           - c_- \xi_-^{-2} - c_+ \xi_+^{-2} + (c_0 - c_- - c_+) \de_0'.
$$
Attempting to check Equation \eqref{DPEqu} we try to evaluate also the second term on the left-hand side, i.e., we would need to make sense of the distributional product
\beq\label{DisPro}
      \vp\Big(\frac{1}{\xi}\Big) \cdot \vphi(\xi) =  \vp\Big(\frac{1}{\xi}\Big) \cdot \big( c_- \xi_-^{-1} + c_+ \xi_+^{-1} + c_0 \de_0(\xi) \big). 
\eeq
If this product exists within the hierarchy of distributional products in \cite[\pg 7, page 69]{O:92}, then, knowing already that $\vp(1/\xi) \cdot \de_0 = -\de_0'/2$ exists, we deduce that 
$$
     \vp\Big(\frac{1}{\xi}\Big) \cdot \big( c_- \xi_-^{-1} + c_+ \xi_+^{-1} \big) 
$$
would have to exist at least as a model product (of type (7.4) in \cite{O:92}). 

Recall (from \cite[(3.2.10)${}'$ and (3.2.14)]{Hoermander:V1}) that 
$$
    \vp\Big(\frac{1}{\xi}\Big) = \xi_+^{-1} - \xi_-^{-1},
$$ 
thus we are tempted to look into the question of existence and meaning of the products $\xi_+^{-1} \cdot \xi_+^{-1}$, $\xi_+^{-1} \cdot \xi_-^{-1}$, and $\xi_-^{-1} \cdot \xi_-^{-1}$: 
We learn from \cite[Theorem 2]{Itano} that the product $x_+^{-1} \cdot x_+^{-1}$ does not exist as a limit of analytic regularizations, hence it follows from \cite[Remark 8.4]{O:92} that $x_+^{-1} \cdot x_+^{-1}$ cannot exist in the context of the most general (model) product within the hierarchy in \cite[page 69]{O:92}.

We may mention that, according to \cite[Theorem 3]{Itano}, the product $\xi_+^{-1} \cdot \xi_+^{-1}$ exists in the sense of taking finite parts of non-convergent analytic regularizations. In any case, such techniques would result in products of the following form, where $a_-$, $a_+$, $b_-$, and $b_+$ are suitable constants:
\begin{align*}
\xi_+^{-1} \cdot \xi_+^{-1} &= \xi_+^{-2} + a_+ \de_0 + b_+ \de_0',\\ 
\xi_+^{-1} \cdot \xi_-^{-1} &= 0 = \xi_-^{-1} \cdot \xi_+^{-1},\\ 
\xi_-^{-1} \cdot \xi_-^{-1} &= \xi_-^{-2} + a_- \de_0 + b_- \de_0'.
\end{align*}
We need to re-investigate Equation \eqref{DisPro}, i.e,
\begin{multline*}
    \vp\Big(\frac{1}{\xi}\Big) \cdot \vphi(\xi) =  
    \vp\Big(\frac{1}{\xi}\Big) \cdot \big( c_- \xi_-^{-1} + c_+ \xi_+^{-1} + c_0 \de_0(\xi) \big)\\
    =  \vp\Big(\frac{1}{\xi}\Big) \cdot \big( c_- \xi_-^{-1} + c_+ \xi_+^{-1}\big) +  
       c_0 \vp\Big(\frac{1}{\xi}\Big) \cdot \de_0(\xi)\\
    = \underbrace{(\xi_+^{-1} - \xi_-^{-1}) \cdot \big( c_- \xi_-^{-1} + c_+ \xi_+^{-1} + c_0 \de_0(\xi) \big)}_{\Xi(\xi)}
     - \frac{c_0}{2} \de_0'(\xi). 
\end{multline*}
The above assumptions imply
\begin{multline*}
    \Xi(\xi) = c_+ \xi_+^{-1} \cdot \xi_+^{-1} - c_- \xi_-^{-1} \cdot \xi_-^{-1}\\
    = c_+ \xi_+^{-2} + c_+ a_+ \de_0(\xi) + c_+ b_+ \de_0'(\xi) - c_- \xi_-^{-2} - c_- a_- \de_0(\xi) - c_- b_- \de_0'(\xi). 
\end{multline*}
Therefore, Equation \eqref{DPEqu} means
$$
   - \frac{ \de_0'(\xi) }{2}  = \vphi'(\xi) +  \vp\Big(\frac{1}{\xi}\Big) \cdot \vphi(\xi) =
       - 2 c_- \xi_-^{-2} + \big(c_+ a_+ - c_- a_-\big) \de_0(\xi) + \big(\frac{c_0}{2} - c_-(1 + b_-) - c_+(1 - b_+) \big) \de_0'(\xi).
$$
By linear independence, $- 2 c_- = 0$, $c_+ a_+ - c_- a_- = 0$,  and $- \frac{1}{2} = \frac{c_0}{2} - c_-(1+b_-) - c_+(1 - b_+)$, i.e., $c_- = 0$, $c_+ a_+ = 0$. and $c_0 + 1 = 2 c_+ (1 - b_+)$. 

If $c_+ = 0$, then $c_0 = -1$ and
$$
   \vphi(\xi) = - \de_0(\xi),
$$
which contradicts the condition $\Upsilon(\xi) = 1$ for $\xi > 0$ (see \cite{GEW,GPOT}), since $\Upsilon(\xi) = \xi \vphi(\xi) = 0$.

If $c_+ \neq 0$, then necessarily $a_+ = 0$ and $c_0$ as above. We obtain
$$
    \vphi(\xi) = c_+ \xi_+^{-1} + c_0 \de_0(\xi) \quad\text{and}\quad 
    \Upsilon(\xi) = \xi \vphi(\xi) = c_+ \theta(\xi) +  0 = c_+ \theta(\xi).
$$
Requiring  $\Upsilon(\xi) = 1$ for $\xi > 0$ (as is done in \cite{GEW,GPOT}) now leads to
\beq\label{UassozTheta3}
   \Upsilon = \theta.
\eeq

\bibliography{GN}
\bibliographystyle{abbrv}

\end{document}